\documentclass[conference]{IEEEtran}
\usepackage{graphicx}
\usepackage{float}
\usepackage{caption}
\usepackage{multicol}
\usepackage{mathrsfs}

\usepackage{amsmath}
\usepackage{amssymb}
\usepackage{amsthm}
\usepackage{multicol}
\usepackage{mathtools}
\usepackage{float}
\usepackage{placeins}
\usepackage{epstopdf}
\usepackage{multirow}
\usepackage{subfigure}
\usepackage{afterpage}
\usepackage{pdflscape}
\usepackage{float}
\restylefloat{table}

\usepackage{enumitem}
\usepackage[utf8]{inputenc}
\floatstyle{ruled}
\newfloat{algorithm}{tbp}{loa}
\providecommand{\algorithmname}{Algorithm}
\floatname{algorithm}{\protect\algorithmname}
\theoremstyle{remark}
\newtheorem{theorem}{Theorem}

\newtheorem{corollary}{Corollary}
\newtheorem{proposition}{Proposition}

\theoremstyle{remark}
\newtheorem{example}{Example}

\pagenumbering{arabic}
\title{A Lifting Construction for Scalar Linear Index Codes}
\begin{document}
%\onecolumn
\author{Roop Kumar Bhattaram, Mahesh Babu Vaddi and B.~Sundar~Rajan, {\it Fellow,~IEEE}
}
\maketitle
%%%%%%%%%%%%%%%%%%%%%%%
\begin{abstract}
This paper deals with scalar linear index codes for canonical multiple unicast index coding problems where there is a source with  $K$ messages and there are $K$ receivers each wanting a unique message and having symmetric (with respect to the receiver index) antidotes (side information). Optimal scalar linear index codes for several such instances of this class of problems have been reported in \cite{MRRarXiv}. These codes can be viewed as special cases of the symmetric unicast index coding problems discussed in \cite{MCJ}. In this paper a lifting construction is given which constructs a sequence of multiple unicast index problems starting from a given multiple unicast index coding problem. Also, it is shown that if an optimal scalar linear index code is known for the problem given starting problem then optimal scalar linear index codes can be obtained from the known code for all the problems arising from the proposed lifting construction. For several of the  known classes of multiple unicast problems our construction is used to obtain several sequences of multiple unicast problem with optimal scalar linear index codes.\footnote{The authors are with the Department of Electrical Communication Engineering, Indian Institute of Science, Bangalore-560012, India. Email:bsrajan@ece.iisc.ernet.in.}
\end{abstract}
%%%%%%%%%%%%%%%%%
\section{Introduction and Background}
\label{sec1}

The problem of index coding with side information was introduced by Birk and Kol \cite{BiK}. Bar-Yossef \textit{et al.} \cite{YBJK} studied a type of index coding problem in which each receiver demands only one single message and the number of receivers equals number of messages. Ong and Ho \cite{OnH} classify the binary index coding problem depending on the demands and the side information possessed by the receivers. An index coding problem is unicast if the demand sets of the receivers are disjoint. If the problem is unicast and if the size of each demand set is one, then it is said to be single unicast. It was found that the length of the optimal linear index code is equal to the minrank of the side information graph of the index coding problem but finding the minrank is NP hard. \\

Maleki \textit{et al.} \cite{MCJ} found the capacity of symmetric multiple unicast index problem with neighboring antidotes (side information). In a symmetric multiple unicast index coding problem with equal number of $K$  messages and source-destination pairs, each destination has a total of $U+D=A<K$ antidotes, corresponding to the $U$ messages before (``up" from) and $D$ messages after (``down" from) its desired message. In this setting, the $k-$th receiver $R_{k}$ demands the message $x_{k}$ having the antidotes
%%%%%%%%
\begin{equation}
\label{antidote}
\{x_{k-U},\dots,x_{k-2},x_{k-1}\}~\cup~\{x_{k+1}, x_{k+2},\dots,x_{k+D}\}.
\end{equation}
%%%%%%%%%
The symmetric capacity of this index coding problem setting is shown to be as follows:
\begin{flushleft}
$U,D \in$ $\mathbb{Z},$\\
$0 \leq U \leq D$,\\
$U+D=A<K$,\ is\\
$C=\left\{
                \begin{array}{ll}
                  {1,\qquad\quad\ A=K-1}\\
                  {\frac{U+1}{K-A+2U}},A\leq K-2\qquad $per message.$
                  \end{array}
              \right.$
\end{flushleft}
\ \\
The above expression for capacity per message can be expressed as below for arbitrary $U$ and $D$:
%%%%%%%%%%%%%
\begin{equation}
\label{capacity}
C=\left\{
                \begin{array}{ll}
                  {1 ~~~~~~~~~~~~~~~~~~~~~~~~~~~~~ \mbox{if} ~~ U+D=K-1}\\
                  {\frac{min(U,D)+1}{K+min(U,D)-max(U,D)}} ~~~ \mbox{if} ~~U+D\leq K-2. 
                  \end{array}
              \right.
\end{equation}
%%%%%%%%%%%%%%%%%%%%%%%%%%%%%%%%

%The side information in the above index coding problem is represented by a directed graph G = ($V$,$E$) with $V = \{1,2,...,K\}$ is the set of vertices and E is the set of edges such that the directed edge $(i,j)\in E$ if receiver (destination) $R_{i}$ knows $x_{j}$. This graph $G$ for a given index coding problem is called side information graph. Let $G$ be a directed graph of $K$ vertices without self loops. A 0-1 matrix $A=(a_{i,j})$ fits in $G$ % for all $i$ and $j$ 
%if $a_{i,i}=1$ for all $i$ and $a_{i,j}$=0 whenever $(i,j)$ is not an edge of $G$. Let $rk_{2}()$ denotes the rank of the 0-1 matrix over $GF(2)$. The $minrank_{2}(G)$ is defined as \cite{YBJK}
%$$minrank_{2}(G) \triangleq min\{rk_{2}(A) : A \ fits \ in \ G\}.$$
%%%%%%%%%%%
%In a given index coding problem with side information graph $G$ = $(V,E)$, an edge $e\in E$ is said to be critical if the removal of $e$ from $G$ strictly reduce the capacity. The index coding problem $G=(V,E)$ is critical if every $e\in E$ is critical \cite{TSG}.
%%%%%%%%%%%%%%%%%%

In the setting of \cite{MCJ} with one sided antidote cases, i.e., the cases where $U$ or $D$ is zero,
without loss of generality, we can assume that $max(U,D)= D$ and $min(U,D)=0$ (all the results hold when $max(U,D)=U$), i.e.,
%%%
\begin{equation}
\label{antidotemcj}
{\cal K}_k =\{x_{k+1}, x_{k+2},\dots,x_{k+D}\}, 
\end{equation}
%%%
\noindent
for which \eqref{capacity} reduces to
%%%%%%%%
\begin{equation}
\label{capacity1}
C=\left\{
                \begin{array}{ll}
                  {1 ~~~~~~~~~~~~ \mbox{if} ~~ D=K-1}\\
                  {\frac{1}{K-D}} ~~~~~~ \mbox{if} ~~D\leq K-2. 
                  \end{array}
              \right.
\end{equation}
%%%%%%%%
symbols per message.
%%%%%%%%%%%%%%%%%%%%%%%%%%%%%%%%%%%%%%%%%%
\subsection{Contributions}
In \cite{MRRarXiv}  we considered ten cases of symmetric multiple unicast problems which are subclasses of the problem discussed in \cite{MCJ} with one sided antidotes. For each of these cases we proposed scalar linear codes that are optimal and also showed that these codes are capacity achieving. In this paper we start with a symmetric multicast index coding problem and construct a sequence of symmetric index coding problems indexed by $m$ where $m=2,3,\cdots $ The starting problem being for $m=1.$ We call this our {\it lifting construction}.  It is shown that if there is an optimal scalar linear index code for the starting problem then this code can be used to construct an optimal scalar linear index code for all the lifted problems, i.e., for all values of $m.$ We apply our lifting  construction to all the ten cases of the symmetric multiple unicast problems studied in \cite{MRRarXiv} and show several interesting outcomes like (i) One case gives another case when lifted with $m=2,$ and (ii) there are two cases such that lifting of problems  from these cases lead to problems of the same cases. In all other cases new classes of symmetric multicast problems are created for all of which an optimal scalar linear code is exhibited.  

%%%%%%%%%%%%%%%%%%%%%%%%%%%%%%%%%%
\section{Lifting Construction}
\label{sec2}

%%%%%%%%%%%%%%%%%%%%%%%%%%%%%%%%%%%%%%%
\begin{theorem}
\label{thm1}
For a multiple unicast index coding problem with $K$ messages $\{y_1,y_2,\cdots,y_K\}$ and the same number of receivers with the receiver $R_k$ wanting the message $y_k$  and having  a symmetric antidote pattern ${\cal K}_k$ given by 
%%%%%%%%%%%%%%%%%
\begin{equation}
\label{antidote1}
{\cal K}_k=\left\{
                \begin{array}{ll}
                  \ y_{k+a_1} \\
                  \ y_{k+a_2} \\
                  \ ~~~\vdots \\
                  \ y_{k+a_d} \\
                  \end{array}
              \right.
\end{equation}
%%%%%%%%%%
where $1 \leq k \leq K$ and  $a_1  < a_2 < \cdots < a_{d-1} < a_d < K,$ let $\mathfrak{C} = \{ t_1,t_2, \cdots, t_l \}$ be a scalar linear code of  length $l.$ \\
For an arbitrary positive integer $m,$  consider the index coding problem with  $mK$ number of messages $\{x_1,x_2,\cdots,x_{mK}\}$ and the number of receivers being $mK,$ and the receiver $R_k$ ($k=1,2, \cdots ,mK$) having antidote pattern given by 
\begin{equation}
\label{antidote2}
{\cal K}_k=\left\{
                \begin{array}{ll}
                  \ x_{k+K},x_{k+2K},\cdots,x_{k+(m-1)K}, \\
                  \ x_{k+a_1},x_{k+a_!+K},x_{k+a_!+2K},\cdots,x_{k+a_!+(m-1)K}\\
                  \ x_{k+a_2},x_{k+a_2+K},x_{k+a_2+2K},\cdots,x_{k+a_2+(m-1)K}\\
                  \ ~~~~~~~~~~~~~~~~~~~~~~~~ \vdots \\
                  \ x_{k+a_d},x_{k+a_d+K},x_{k+a_d+2K},\cdots,x_{k+a_d+(m-1)K}. \\
                  \end{array}
              \right.
\end{equation}
For this index coding problem the  code $\mathfrak{C}^{(m)}$  obtained by replacing every message symbol $y_k$ in the code symbols of $\mathfrak{C}$ with $\sum_{i=0}^{m-1} x_{k+iK} $ for $1 \leq k \leq K,$ i.e., by making the substitution $y_k= \sum_{i=0}^{m-1} x_{k+iK} $ is a valid scalar linear index code of length $l.$
\end{theorem}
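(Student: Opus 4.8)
The plan is to show that each aggregated variable $Y_k := \sum_{i=0}^{m-1} x_{k+iK}$ (for $1 \le k \le K$) plays in the lifted problem exactly the role that the original message $y_k$ played, so that decodability in the lifted problem is inherited from decodability in the original problem, transported through the substitution $y_k \mapsto Y_k$. Note that, by construction, the code symbols of $\mathfrak{C}^{(m)}$ are precisely $t_j(Y_1,\dots,Y_K)$, i.e.\ the original linear forms $t_j$ with every $y_k$ replaced by $Y_k$; write $t_j^{(m)}$ for these symbols. Since $\mathfrak{C}^{(m)}$ has as many symbols as $\mathfrak{C}$, its length is again $l$, so only validity (decodability at every receiver) needs to be checked.

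First I would isolate the algebraic heart of the argument. Validity of $\mathfrak{C}$ for the original problem means exactly that for every $k$ the linear form $y_k$ lies in the span of $\{t_1,\dots,t_l\}\cup\{y_{k+a_1},\dots,y_{k+a_d}\}$; equivalently there are scalars $\lambda_1,\dots,\lambda_l$ and $\mu_1,\dots,\mu_d$ with $\sum_j \lambda_j t_j + \sum_i \mu_i y_{k+a_i} = y_k$ as a formal identity of linear forms in $y_1,\dots,y_K$. The substitution $y_k \mapsto Y_k$ is a linear map on the space of such forms, and applying it to this identity yields $\sum_j \lambda_j t_j^{(m)} + \sum_i \mu_i Y_{k+a_i} = Y_k$. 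Thus the \emph{same} decoding coefficients that recover $y_k$ in the original problem recover the aggregate $Y_{\bar k}$ in the lifted problem, once the aggregates $Y_{\bar k + a_i}$ are available to the receiver.

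Next I would verify that receiver $R_k$ of the lifted problem can in fact assemble everything this identity requires. Write $\bar k$ for the residue of $k$ modulo $K$. The crucial indexing fact is that, reading indices cyclically modulo $mK$, the $m$ symbols $\{x_k, x_{k+K},\dots,x_{k+(m-1)K}\}$ are exactly those whose index is $\equiv \bar k \pmod K$, so their sum is $Y_{\bar k}$. The antidote block $\{x_{k+a_i}, x_{k+a_i+K},\dots,x_{k+a_i+(m-1)K}\}$ is likewise the full residue class $\equiv \bar k + a_i \pmod K$, so $R_k$ can form $Y_{\bar k + a_i}$ by summing its side information, for each $i = 1,\dots,d$. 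Feeding these into the transported identity gives $R_k$ the aggregate $Y_{\bar k}$ from the transmitted symbols $t_j^{(m)}$. Finally, because $R_k$ already holds the $m-1$ remaining copies $x_{k+K},\dots,x_{k+(m-1)K}$ of its own residue class, it recovers its demand as $x_k = Y_{\bar k} - \sum_{i=1}^{m-1} x_{k+iK}$. As this works for every $k \in \{1,\dots,mK\}$, the code $\mathfrak{C}^{(m)}$ is valid.

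I expect the main obstacle to be bookkeeping rather than conceptual: one must argue carefully that the listed side-information symbols exhaust the residue classes modulo $K$ (using cyclic indexing modulo $mK$ and the fact that each class contains exactly $m$ indices in $\{1,\dots,mK\}$), and that the receiver's own demand $x_k$ is the unique missing copy in its class. The algebraic step—that a valid decoding identity survives the linear substitution $y_k \mapsto Y_k$—is essentially automatic once $\mathfrak{C}^{(m)}$ is defined by that very substitution, but it is worth stating explicitly, since it is the structural reason the lifting construction preserves optimality and length.
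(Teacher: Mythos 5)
Your proof is correct and follows essentially the same route as the paper: define the aggregates $x_k' = \sum_{i=0}^{m-1} x_{k+iK}$, observe that the lifted antidote pattern lets each receiver form the aggregates corresponding to its original antidotes, recover its own aggregate from $\mathfrak{C}^{(m)}$, and strip off the $m-1$ known copies in its own residue class to obtain $x_k$. The only difference is presentational: you make the span/substitution argument explicit and treat all $mK$ receivers uniformly via residues modulo $K$, whereas the paper handles receivers $R_k$, $1 \le k \le K$, first and then extends to $R_{k+jK}$ by the symmetry relations \eqref{antidotestructure1} and \eqref{primestructure}.
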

%%%%%%%%%%%%%%%
%%%%%%%%%%%%%%%
\begin{proof}
In the set $\{x_{k}, x_{k+K}, x_{k+2K}, \dots ,x_{k+(m-1)K}\}$ (the set of message symbols appearing in  $x_{k}+x_{k+K}+x_{k+2K}+\dots+x_{k+(m-1)K}$), every message symbol $x_{k+jK},$ $0 \leq j \leq m-1,$  is in the antidote of every receiver $R_{k+iK},~~~ 0 \leq i \neq j \leq m-1,$  according to given antidote pattern in \eqref{antidote2}. In other words, 
\begin{equation}
\label{antidotestructure1}
{\cal K}_{k+jK} = ({\cal K}_k \cup x_k) \setminus \{x_{k+jK} \}. ~~~ 0 \leq j \leq m-1.
\end{equation}
\noindent
Denote by  $x_i^\prime$ the sum  $x_{i}+x_{i+K}+x_{i+2K}+\dots+x_{i+(m-1)K}.$ 
It is easily seen that 
\begin{equation}
\label{primestructure}
 x_{i}^\prime = x_{i+K}^\prime =  x_{i+2K}^\prime =  \dots  =    x_{i+(m-1)K}^\prime.
\end{equation}
\noindent
For every receiver whose wanted message symbol appears in the expression of $x_{k}^\prime ,$ i.e., appears in the set $\{x_{k}, x_{k+K}, x_{k+2K}, \dots ,x_{k+(m-1)K}\},$ every message symbol present in $x_{k+a_1}^\prime , x_{k+a_2}^\prime , \cdots , x_{k+a_d}^\prime$ are antidotes as per the given antidote pattern in \eqref{antidote2}.

Since $\mathfrak{C}^{(m)}$ is the index code obtained by making the substitution $y_k= \sum_{i=0}^{m-1} x_{k+iK}$ in the available code $\mathfrak{C},$ this substitution is essentially same as replacing $y_{i}$ with $x_{i}^\prime ~~ i=1,2, \cdots, K.$ 

First let us consider only the receivers $R_k, 1 \leq k \leq K.$ That the receivers $R_k, 1 \leq k \leq K$ can obtain  $y_k$ using the code $\mathfrak{C}$ means that the receivers can obtain the values of $x_k^\prime$ using the code $\mathfrak{C}^{(m)}.$  Then, since $R_k$ has all messages values in the expression of $x_k^\prime$ other than $x_k$ it can obtain its wanted message $x_k.$  

Next, for a fixed $k,~~ 1 \leq k \leq K$, all the receivers $R_{k+jK}, ~~ 1 \leq j \leq m-1,$ can decode their wanted messages due to the antidote structures given by \eqref{antidotestructure1} and \eqref{primestructure}.  Thus the code $\mathfrak{C}^{(m)}$ satisfy the demands of all the receivers for $k \in \{1,2,\dots,mK\}$. 

That the length of the code $\mathfrak{C}^{(m)}$ is same as that of the code $\mathfrak{C}$ is straight forward. This completes the proof.
\end{proof}

%Consider an index code by replacing $y_{i}$ with $x_{i}\prime=x_{i}+x_{i+K}+x_{i+2K}+\dots+x_{i+(m-1)K}$ for $i=1,2,\cdots,K$ in $\mathfrak{C_{1}}$. The code $\mathfrak{C_{1}}$ gives the guarantee for the decodability of all $K$ symbols of the form $y_{i}$ with the given antidote pattern (2). Thus the code $\mathfrak{C_{1}}$ gives the guarantee for the decodability of all $K$ symbols of the form $x_{i}\prime=x_{i}+x_{i+K}+x_{i+2K}+\dots+x_{i+(m-1)K}$ with the given antidote pattern in (3). We prove that every receiver $R_{k}$ can decode its wanted message $x_{k}$ from the above code for $k = \{1,2,\dots,Km\}$. Define $i$ = 1+($k-1$) mod $K$.
%To decode $x_{k}$, the $k^{th}$ receiver has to decode $x_{i}\prime=x_{i}+x_{i+K}+x_{i+2K}+\dots+x_{i+(m-1)K}$ because $x_{k}$ is present in $x_{i}\prime$. From $x_{i}\prime$,  receiver $R_{k}$ can decode $x_{k}$ because all other messages in $x_{i}+x_{i+K}+x_{i+2K}+\dots+x_{i+(m-1)K}$ are antidotes to $x_{k}$. Thus the code satisfy the demands of all the receivers for $k = \{1,2,\dots,mK\}$.
%\end{proof}
%%%%%%%%%%%%%%%%%%%%%%%%%%%%%%%%%%%%%%%%

The following Theorem \ref{thm2} identifies certain cases for which, in the lifting construction of Theorem \ref{thm1} if the code $\mathfrak{C}$ is of optimal length or equivalently a capacity achieving code then the lifted codes $\mathfrak{C}^{(m)}$ are also capacity achieving for all $m.$
%%%%%%%%%%%%%%%%%%%%%%%%%%%%%%%%%%%%%%%%%%%%%%%%%
%%%%%%%%%%%%%%%%%%%%%%%%%%%%%%%%%%%%%%%%%%%%%%%%%
\begin{theorem}
\label{thm2}
In the lifting construction of Theorem \ref{thm1} if the given code $\mathfrak{C}$ has length $l=K-a_d,$ then the given code and all its lifted codes are of optimal length and hence capacity achieving.

%Let $\mathfrak{C_{1}}$ is the scalar linear code achieving capacity $\frac{1}{K-D}$ with the number of messages $K$ ($y_{1},y_{2},\cdots,y_{K}$), number of receivers $K$, for the antidote pattern given in \eqref{antidote1} and maximum difference between the indices of required message and antidote of a receiver $R_{k}$ is $a_{d}=D$.
%The scalar linear code of optimal length with $Km$ number of receivers for some integer $m$, $Km$ number of messages $\{x_{1},x_{2},\cdots,x_{Km}\}$,  maximum difference between the indices of required message and antidote of a receiver $R_{k}$ is $Km-K+D$ and the antidote pattern as given in \eqref{antidote2} can be obtained by replacing every symbol $y_{k}$ in $\mathfrak{C_{1}}$ with \mbox{$x_{k}+x_{k+K}+x_{k+2K}+\dots+x_{k+(m-1)K}$}.
\end{theorem}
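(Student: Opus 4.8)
The plan is to establish optimality through the standard converse that lower-bounds both the optimal scalar linear length (the minrank of the side-information digraph) and the broadcast rate by the size of a maximum acyclic induced subgraph (MAIS). Since Theorem~\ref{thm1} already guarantees that $\mathfrak{C}$ and each lifted code $\mathfrak{C}^{(m)}$ are valid scalar linear codes of length $l=K-a_d$, it suffices to exhibit, in each problem, an induced acyclic subgraph of the side-information digraph on exactly $K-a_d$ vertices. Such a set forces the minrank to be at least $K-a_d$, so the upper bound $l=K-a_d$ is met with equality and the code is optimal; and because the MAIS is also a converse for the broadcast rate (hence for the symmetric capacity), the same set shows the per-message capacity equals $1/(K-a_d)$ in both the starting and the lifted problems, so every such code is capacity achieving.

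First I would treat the starting problem. Model it as a digraph on vertices $\{1,\dots,K\}$ (indices read cyclically modulo $K$) with an arc $k\to k+a_i$ recording that $R_k$ knows $y_{k+a_i}$. I claim the vertex set $S=\{a_d+1,a_d+2,\dots,K\}$, of size $K-a_d$, induces an acyclic subgraph: for $k\in S$ every antidote index $k+a_i$ is strictly larger than $k$, and it either stays $\le K$ (a forward arc inside $S$, since $k+a_i>a_d$) or exceeds $K$ and wraps to a value $\le a_d$ lying outside $S$. Thus all induced arcs respect the natural order $a_d+1<\dots<K$, so $S$ is acyclic and the MAIS is at least $K-a_d$; combined with the valid length-$(K-a_d)$ code this yields optimality and capacity $1/(K-a_d)$.

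Then I would carry the same idea to the lifted problem on $\{1,\dots,mK\}$ (indices modulo $mK$), taking the same set $S=\{a_d+1,\dots,K\}$ sitting in the first block. This verification is the crux and the main obstacle, because each vertex now has three families of antidotes to check: the pure shifts $k+jK$ for $1\le j\le m-1$, the base copies $k+a_i$ ($j=0$), and the shifted copies $k+a_i+jK$ for $1\le j\le m-1$. The argument I expect to use is that for $k\in S$ every antidote other than the $j=0$ copies $k+a_i$ exceeds $K$ before reduction, and after reduction modulo $mK$ it is either still $>K$ or wraps down to a value $\le a_d$; in both cases it leaves $S$. Hence, exactly as before, the only induced arcs are the forward arcs $k\to k+a_i\le K$, so $S$ is acyclic of size $K-a_d$ and the MAIS is again at least $K-a_d$.

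Finally I would assemble the pieces: Theorem~\ref{thm1} provides $\mathfrak{C}^{(m)}$ of length $K-a_d$, so the minrank of the lifted problem is squeezed between $K-a_d$ (the MAIS lower bound) and $K-a_d$ (the code), forcing equality and hence optimality; the same MAIS value bounds the broadcast rate from below, so the per-message capacity of every lifted problem is again $1/(K-a_d)$ and $\mathfrak{C}^{(m)}$ achieves it. The only real care needed throughout is the modular-arithmetic bookkeeping in the acyclicity check and fixing the cyclic index convention (modulo $K$ in the starting problem and modulo $mK$ in the lifted one); once $S$ is shown acyclic, optimality and capacity-achievability follow immediately.
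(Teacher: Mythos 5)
Your proof is correct, but it takes a genuinely different route from the paper. The paper's converse is a monotonicity argument leaning on \cite{MCJ}: since the antidote pattern \eqref{antidote1} is a subset of the one-sided consecutive pattern $\{x_{k+1},\dots,x_{k+a_d}\}$, and the lifted pattern \eqref{antidote2} is a subset of the consecutive pattern $\{x_{k+1},\dots,x_{k+(a_d+(m-1)K)}\}$, the known capacity $\frac{1}{K-D}$ of the consecutive case upper-bounds the capacity of both the original problem (with $D=a_d$) and every lifted problem (with $D=a_d+(m-1)K$, giving $mK-(a_d+(m-1)K)=K-a_d$), so no code can be shorter than $K-a_d$. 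You instead prove the converse from scratch via the maximum-acyclic-induced-subgraph bound, exhibiting the explicit set $S=\{a_d+1,\dots,K\}$ and verifying, with the modular bookkeeping done correctly, that it remains acyclic in every lifted problem; this is precisely the step the paper replaces wholesale by the subset-of-consecutive-antidotes observation. Both converses are valid and both yield capacity-achievability, since the MAIS bound, like the capacity bound of \cite{MCJ}, applies to arbitrary (not just scalar linear) codes. What your route buys is self-containedness: it needs only the standard MAIS converse rather than the capacity theorem of \cite{MCJ} together with the implicit fact that removing side information cannot increase capacity; what the paper's route buys is brevity, since the entire acyclicity verification collapses into a one-line containment of antidote sets.
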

%%%%%%%%%%%%%%%
\begin{proof}
It is known \cite{MCJ} that the capacity of the multiple unicast index coding problem with one sided consecutive antidotes ${\cal K}_k = \{x_{k+1}, x_{k+2}, \cdots ,x_{k+D} \}$ is $\frac{1}{K-D}$ symbols per message or equivalently the optimal length of a index code is $K-D.$  Since the antidotes given by \eqref{antidote1} is a proper subset of the set of consecutive antidotes with $D=a_d$ the capacity of the given code in Theorem \ref{thm1} can be at most $\frac{1}{K-a_d}.$ Hence every linear index code with length $K-a_d$ is an optimal length code for the antidote pattern \eqref{antidote1}.

Now we proceed to establish  that the lifted codes are also of optimal length. It is easily seen from the antidote pattern given by \eqref{antidote2} that it is a proper subset of consecutive antidote pattern ${\cal K}_k = \{x_{k+1}, x_{k+2}, \cdots ,x_{k+(a_d+(m-1)K)} \}$ whose optimal length is $mK - (a_d+(m-1)K) = K-a_d.$ Since the lifting construction also gives the code of length $K-a_d$ for any $m,$ it follows that all the lifted codes are of optimal length.
%In Theorem 1, we proved that every receiver $R_{k}$ for $k=\{1,2,\cdots,Km\}$ could recover their wanted messages from the proposed construction. \\
%
%To prove the constructed code is optimal, the maximum difference between the indices of the required message $x_{k}$ of the receiver $R_{k}$ and its antidotes is $k+a_{d}+(m-1)K-k$=$a_{d}+(m-1)K$=$D+(m-1)K$. For the $k^{th}$ receiver ${\cal K}_k$ $\subseteq \{x_{k+1},x_{k+2},\cdots,x_{k+D+(m-1)K\}}$ for $k=\{1,2,\cdots,Km\}$. Capacity achieved by any code with  ${\cal K}_k$ $\subseteq \{x_{k+1},x_{k+2},\cdots,x_{k+D+(m-1)K\}}$ for $k=\{1,2,\cdots,Km\}$ should be less than or equal to  $\frac{1}{(Km)-(D+(m-1)K)}= \frac{1}{K-D}$. The code $\mathfrak{C_{1}}$ is achieving capacity $\frac{1}{K-D}$, thus $\mathfrak{C_{1}}$ uses $K-D$ code symbols. After lifting (replacing $y_{k}$ with $x_{k}'=x_{k}+x_{k+K}+\cdots+x_{k+(m-1)K}$), the number of code symbols is same as the number of code symbols in $\mathfrak{C_{1}}$. We use $K-D$ symbols to represent $Km$ message symbols. Thus the capacity achieved by the proposed method is $\frac{1}{K-D}$ and the code constructed is of optimal length.
\end{proof}
%%%%%%%%%%%%%
%%%%%%%%%%%%%%%%%%%%%%%%%%%%%%%%%%%%%%%%
When $d=1$ in the lifting construction of Theorem \ref{thm1}, it reduces to the following corollary.
%%%%%%%%%%%%%%%%%%%%%%%%%%%%%%%%%%%
\begin{corollary}
\label{cor1}
When $d=1$ in the lifting construction,  let $a_1=D.$ Then, for any $m$  the antidote \eqref{antidote2} for the lifted index coding problem  with $mk$ number of messages is
$${\cal K}_k=\left\{
                \begin{array}{ll}
                  \ x_{k+K},x_{k+2K},x_{k+3K},\cdots,x_{k+(m-1)K} \\
                                  \ x_{k+K+D},x_{k+2K+D},x_{k+3K+D},\cdots,x_{k+(m-1)K+D} \\ 
                  \end{array}
              \right.
$$
for $1 \leq k \leq mK.$ The lifted code is\\
$\mathfrak{C}^{(m)}=\{x_{i+(j-1)D}^\prime+x_{i+jD}^\prime|~ 1 \leq i \leq D, ~~ 1 \leq j \leq  \frac{K}{D}-1\}$
where \\
  $~~~~~~~x_{l}^\prime=x_{l}+x_{l+K}+x_{l+2K}+\dots+x_{l+(m-1)K}$ \\
for $l=1,2,\cdots,K$.
\end{corollary}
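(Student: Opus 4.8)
The plan is to derive Corollary~\ref{cor1} as the $d=1$, $a_1=D$ specialization of Theorems~\ref{thm1} and~\ref{thm2}; the only genuinely new work is to exhibit an explicit optimal base code $\mathfrak{C}$ for the single-antidote problem ${\cal K}_k=\{y_{k+D}\}$. The displayed antidote pattern is immediate: setting $d=1$ and $a_1=D$ in \eqref{antidote2} leaves only two rows, namely $\{x_{k+K},\ldots,x_{k+(m-1)K}\}$ and $\{x_{k+D},x_{k+D+K},\ldots,x_{k+D+(m-1)K}\}$, which is the pattern asserted in the corollary. So it remains to build a base code of length $K-a_d=K-D$, after which Theorem~\ref{thm2} certifies optimality of $\mathfrak{C}$ and of every lifted code $\mathfrak{C}^{(m)}$, and Theorem~\ref{thm1} produces the explicit form of $\mathfrak{C}^{(m)}$.

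First I would fix the base code. Since $\tfrac{K}{D}$ appears in the claimed index range, I assume $D\mid K$ and set $n=K/D$. Reading indices modulo $K$, partition $\{1,\ldots,K\}$ into the $D$ residue classes under translation by $D$, the $i$-th class being $\{i,i+D,i+2D,\ldots,i+(n-1)D\}$ for $1\le i\le D$. On each class I take the chain of consecutive sums, i.e. I let $\mathfrak{C}=\{\,y_{i+(j-1)D}+y_{i+jD}\mid 1\le i\le D,\ 1\le j\le n-1\,\}$. This is a scalar linear code of length $D(n-1)=K-D$, matching the optimal length required by Theorem~\ref{thm2}.

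Next I would verify that $\mathfrak{C}$ decodes the base problem. For a receiver $R_{i+(j-1)D}$ with $1\le j\le n-1$, the antidote is $y_{i+(j-1)D+D}=y_{i+jD}$, which is exactly the summand it must cancel from the transmitted symbol $y_{i+(j-1)D}+y_{i+jD}$ to recover $y_{i+(j-1)D}$. The one remaining receiver in each class, $R_{i+(n-1)D}$, has antidote $y_{i+nD}=y_{i+K}=y_i$; starting from this known value it recovers $y_{i+D}$ from $y_i+y_{i+D}$, then $y_{i+2D}$ from $y_{i+D}+y_{i+2D}$, and so on up the chain until it reaches its demand $y_{i+(n-1)D}$. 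Hence $\mathfrak{C}$ is a valid optimal base code. Applying the substitution $y_k\mapsto x_k^\prime=\sum_{i=0}^{m-1}x_{k+iK}$ of Theorem~\ref{thm1} to each code symbol turns $y_{i+(j-1)D}+y_{i+jD}$ into $x_{i+(j-1)D}^\prime+x_{i+jD}^\prime$, giving precisely the stated $\mathfrak{C}^{(m)}$, which Theorem~\ref{thm1} guarantees is valid and Theorem~\ref{thm2} guarantees is optimal for every $m$.

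The only delicate step is the decoding check for the wrap-around receiver $R_{i+(n-1)D}$ in each residue class, where correctness relies on the cyclic identity $y_{i+nD}=y_i$ (equivalently on $D\mid K$) and on chaining through all $n-1$ transmitted sums of the class; the remaining receivers and the length count are routine, and once $\mathfrak{C}$ is in hand the passage to $\mathfrak{C}^{(m)}$ is a mechanical substitution licensed entirely by Theorem~\ref{thm1}.
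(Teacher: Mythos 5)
Your proposal is correct and follows essentially the same route as the paper: Corollary \ref{cor1} is obtained by specializing \eqref{antidote2} to $d=1$, $a_1=D$, taking the chain code $\{y_{i+(j-1)D}+y_{i+jD} \mid 1 \leq i \leq D,\ 1 \leq j \leq \frac{K}{D}-1\}$ as the base code, applying the substitution of Theorem \ref{thm1} to obtain $\mathfrak{C}^{(m)}$, and invoking Theorem \ref{thm2} for optimality. The only difference is that the paper imports the base code and its optimality from {\it Case I} of \cite{MRRarXiv} (as noted in the remark immediately following the corollary), whereas you verify its decodability from scratch via the wrap-around chain argument for the receiver $R_{i+(n-1)D}$; this makes your write-up self-contained but is not a different method. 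One small point in your favor: the antidote pattern you derive correctly contains $x_{k+D}$ itself (as \eqref{antidote2} and Example \ref{ex1} with ${\cal K}_k=\{x_{k+20},x_{k+4},x_{k+24}\}$ require), even though the display in the corollary's statement starts the second row at $x_{k+K+D}$ --- that omission is a typo in the paper, not an error in your derivation.
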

%%%%%%%%%%%%%%%%%%%%%%%%%%%%%%%%%%%%%%%
Notice that when $D$ in Corollary \ref{cor1} divides $K$ then the problem reduces to {\it Case I} of \cite{MRRarXiv} where it is shown that the code is of optimal length $K-D.$ Hence, from Theorem \ref{thm2} it follows that all the lifted codes are also of optimal length.

The following example illustrates the lifting construction of Theorem \ref{thm1} for $m=2$ and $m=3$ with the starting code being the case for which $d=1$ in \eqref{antidote1}.   
%%%%%%%%%
\begin{example}
\label{ex1}
Consider the case $K=20,~ d=1, a_1=4$ and \mbox{${\cal K}_k = \{x_{k+4}\}$  for $k=1,2, \cdots, 20.$} We have the code \\
 $\mathfrak{C}$=$\{x_1+x_5, ~ x_{2}+x_{6}, ~ x_{3}+x_{7}, ~ x_{4}+x_{8}, ~ x_5+x_9, \\
~~~~~ x_{6}+x_{10}, ~ x_{7}+x_{11}, ~ x_{8}+x_{12}, ~ x_9+x_{13}, ~ x_{10}+x_{14}, \\
~~~~~ x_{11}+x_{15}, ~ x_{12}+x_{16}, ~ x_{13}+x_{17}, ~ x_{14}+x_{18}, ~ x_{15}+x_{19},\\
~~~~~ x_{16}+x_{20}\}$.\\

\noindent
Let \mbox{$m=2$}. Then, for the lifted code  \mbox{$K=40$} and ${\cal K}_k = \{x_{k+20},~ x_{k+4},~ x_{k+24}\}$ for $k=1,2, \cdots, 40.$ The lifted code is given by \\
$\mathfrak{C}^{(2)}$=$\{x_{1}+x_{21}+x_{5}+x_{25},\ ~~x_{2}+x_{22}+x_{6}+x_{26},\\
~~~~~~~~ x_{3}+x_{23}+x_{7}+x_{27},\ ~~x_{4}+x_{24}+x_{8}+x_{28},\\
~~~~~~~~ x_{5}+x_{25}+x_{9}+x_{29},\ ~~x_{6}+x_{26}+x_{10}+x_{30},\\
~~~~~~~~ x_{7}+x_{27}+x_{11}+x_{31},\ ~x_{8}+x_{28}+x_{12}+x_{32},\\
~~~~~~~~ x_{9}+x_{29}+x_{13}+x_{33},\ ~x_{10}+x_{30}+x_{14}+x_{34},\\
~~~~~~~~ x_{11}+x_{31}+x_{15}+x_{35},\ x_{12}+x_{32}+x_{16}+x_{36},\\
~~~~~~~~ x_{13}+x_{33}+x_{17}+x_{37},\ x_{14}+x_{34}+x_{18}+x_{38},\\
~~~~~~~~ x_{15}+x_{35}+x_{19}+x_{39},\ x_{16}+x_{36}+x_{20}+x_{40}\}.$\\

\noindent
For $m=3,$ we have $K=60$ and\\
 $~~~~~{\cal K}_k = \{x_{k+20},~ x_{k+40},~ x_{k+4},~ x_{k+24},~ x_{k+44}\}$  \\ 
for $k=1,2, \cdots, 60.$  The lifted code is  \\
$\mathfrak{C}^{(3)}$=$\{x_{1}+x_{21}+x_{41}+x_{5}+x_{25}+x_{45},\\
~~~~~~~~ x_{2}+x_{22}+x_{42}+x_{6}+x_{26}+x_{46},\\
~~~~~~~~ x_{3}+x_{23}+x_{43}+x_{7}+x_{27}+x_{47},\\
~~~~~~~~ x_{4}+x_{24}+x_{44}+x_{8}+x_{28}+x_{48},\\
~~~~~~~~ x_{5}+x_{25}+x_{45}+x_{9}+x_{29}+x_{49},\\
~~~~~~~~ x_{6}+x_{26}+x_{46}+x_{10}+x_{30}+x_{50},\\
~~~~~~~~ x_{7}+x_{27}+x_{47}+x_{11}+x_{31}+x_{51},\\
~~~~~~~~ x_{8}+x_{28}+x_{48}+x_{12}+x_{32}+x_{52},\\
~~~~~~~~ x_{9}+x_{29}+x_{49}+x_{13}+x_{33}+x_{53},\\
~~~~~~~~ x_{10}+x_{30}+x_{50}+x_{14}+x_{34}+x_{54},\\
~~~~~~~~ x_{11}+x_{31}+x_{51}+x_{15}+x_{35}+x_{55},\\
~~~~~~~~ x_{12}+x_{32}+x_{52}+x_{16}+x_{36}+x_{56},\\
~~~~~~~~ x_{13}+x_{33}+x_{53}+x_{17}+x_{37}+x_{57},\\
~~~~~~~~ x_{14}+x_{34}+x_{54}+x_{18}+x_{38}+x_{58},\\
~~~~~~~~ x_{15}+x_{35}+x_{55}+x_{19}+x_{39}+x_{59},\\
~~~~~~~~ x_{16}+x_{36}+x_{56}+x_{20}+x_{40}+x_{60}\}.$\\
\end{example}
%%%%%%%%%%%%%

%All the ten cases ({\bf Case I} to {\bf Case X}) of codes listed in the previous section are of optimal length and from Theorem \ref{thm2} it follows that the lifted codes from any of these ten cases will also be of optimal length. 
%%%%%%%%%%%%%%%%%%%%%%%%%%%%%%%%%%%%%%%%%
\noindent
\section{Lifting construction for consecutive one sided antidote cases}
\label{sec3}

In this subsection Theorem \ref{thm1} is specialized for the two cases {\it Case VI} and {\it Case X}  of \cite{MRRarXiv} which have the one-sided consecutive antidote structure and stated in the form of Corollary \ref{cor2} and Corollary \ref{cor3} respectively. These are  special cases of the two sided consecutive  antidote structure studied \cite{MCJ}. 
%%%%%%%%%%%%%%%%%%%%%%%%%%%%%%%%%%%%%%%%%%%%%%%%%
\begin{corollary}
\label{cor2}

Consider the index coding problem with the antidote structure as in \eqref{antidote0} and  there is an integer $\lambda$ such that  $K-D$ divides $K-\lambda$ and  $\lambda$ divides $(K-D).$ 
\begin{equation}
\label{antidote0}
{\cal K}_k = \{x_{k+1}, x_{k+2}, \cdots, x_{k+D} \}
\end{equation}
In \cite{MRRarXiv} an optimal scalar linear code for this problem is shown to be the code \\
$\mathfrak{C}=\{x_{i}+x_{i+r}+\dots+x_{i+(q-1)r}+x_{qr+1+(i-1) mod \lambda) }\} \\ 
~~~~~~~~~~ |\ i = \{1,2,\dots,r\}\}$ \\
where $K-D=r$, and $\frac{K-\lambda}{K-D}=q$.

This case corresponds to Theorem \ref{thm1} with $d=D$ and $a_i=i$ for $1 \leq i \leq D.$   The lifting construction for this problem gives the code $\mathfrak{C}^{(m)}$ given by

%then  the proposed scalar linear code for the number of messages  $Km$ $\{x_{1},x_{2},\cdots,x_{Km}\}$ , number of receivers $Km$ and  $(Km-K+D)$ antidotes where the antidote pattern is as in \eqref{antidote1} is  
$\mathfrak{C}^{(m)}=\{x_{i}^\prime+x_{i+r}^\prime+\dots+x_{i+(q-1)r}^\prime+x_{qr+1+(i-1) mod \lambda }^\prime\} \\ 
~~~~~~~~~~ |\ i = 1,2,\dots,r\}$ \\
where \\
$ ~~~~~  x_{l}^\prime=x_{l}+x_{l+K}+x_{l+2K}+\dots+x_{l+(m-1)K}$\\
 for $l=1,2,\cdots,K$, $K-D=r$, and $\frac{K-\lambda}{K-D}=q$. \\
From Theorem \ref{thm2} it follows that all the lifted codes are of optimal length.
\end{corollary}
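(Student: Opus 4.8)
The plan is to present this corollary as a direct instantiation of Theorem \ref{thm1} followed by Theorem \ref{thm2}, taking the validity and optimality of the starting code $\mathfrak{C}$ to be established in \cite{MRRarXiv} under the stated divisibility hypotheses. First I would match the antidote structure: the one-sided consecutive pattern ${\cal K}_k = \{x_{k+1}, x_{k+2}, \cdots, x_{k+D}\}$ of \eqref{antidote0} is exactly the pattern \eqref{antidote1} of Theorem \ref{thm1} with $d = D$ and $a_j = j$ for $1 \le j \le D$; in particular $a_d = a_D = D$. With this identification the hypotheses of Theorem \ref{thm1} are satisfied, so the theorem applies verbatim.

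Next I would invoke Theorem \ref{thm1}, which asserts that replacing each message symbol of the starting problem by $x_j^\prime = \sum_{i=0}^{m-1} x_{j+iK}$ produces a valid scalar linear index code $\mathfrak{C}^{(m)}$ of the same length for the lifted problem with antidote pattern \eqref{antidote2}. Carrying out this substitution term by term on each codeword $x_i + x_{i+r} + \cdots + x_{i+(q-1)r} + x_{qr+1+((i-1)\bmod\lambda)}$ of $\mathfrak{C}$ replaces every $x_l$ by $x_l^\prime$ and yields precisely the claimed codeword $x_i^\prime + x_{i+r}^\prime + \cdots + x_{i+(q-1)r}^\prime + x_{qr+1+((i-1)\bmod\lambda)}^\prime$, which establishes the stated form of $\mathfrak{C}^{(m)}$.

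The one point needing care is to check that every index occurring in a codeword lies in $\{1, \cdots, K\}$, so that the primed symbols $x_l^\prime$ are well defined. Here I would use the relation $qr = K - \lambda$, which holds because the hypothesis that $K-D$ divides $K-\lambda$ makes $q = (K-\lambda)/(K-D)$ an integer and $r = K-D$: since $i \le r$ we have $i + (q-1)r \le qr = K - \lambda \le K$, and since $(i-1)\bmod\lambda \le \lambda - 1$ we have $qr + 1 + ((i-1)\bmod\lambda) \le qr + \lambda = K$. Thus all indices fall in the admissible range and the substitution is legitimate.

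Finally, for optimality I would appeal to Theorem \ref{thm2}. The code $\mathfrak{C}$ consists of one codeword for each $i \in \{1, \cdots, r\}$, so its length is $l = r = K - D = K - a_d$, which is exactly the hypothesis of Theorem \ref{thm2}. Hence the given code and all its lifted codes $\mathfrak{C}^{(m)}$ are of optimal length and therefore capacity achieving. I expect no genuine obstacle in this argument, since the corollary is a specialization of results already proved; the only nontrivial bookkeeping is the index-range check above and the verification that $l$ equals $K - a_d$ so that Theorem \ref{thm2} is directly applicable.
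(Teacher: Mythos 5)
Your proposal is correct and follows essentially the same route as the paper: the paper justifies Corollary~\ref{cor2} exactly by identifying the antidote pattern \eqref{antidote0} with \eqref{antidote1} via $d=D$, $a_i=i$, applying Theorem~\ref{thm1} to obtain $\mathfrak{C}^{(m)}$ by the substitution $x_l \mapsto x_l^\prime$, and invoking Theorem~\ref{thm2} since the starting code has length $r=K-D=K-a_d$. Your additional bookkeeping (verifying $qr=K-\lambda$ so that all codeword indices lie in $\{1,\dots,K\}$, and confirming $l=K-a_d$) is sound and only makes explicit what the paper leaves implicit.
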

%%%%%%%%%%%%%%%%%%%%%%%%%%%%%%%%%%
The following examples illustrates Corollary \ref{cor2}.
%%%%%%%%%%%%%%%%%%%%%%%%%%%%%%%%%%%
\begin{example}
\label{ex2}
Consider $K=21,~ D=17,~ \lambda=1, m=1$ and ${\cal K}_k$ for $k=1,2,\cdots,21$ is as in \eqref{antidote0}. The optimal  index  code proposed for this code in \cite{MRRarXiv}is\\
$\mathfrak{C}=\{x_{1}+x_{5}+x_{9}+x_{13}+x_{17}+x_{21}, \\
~~~~~~~~x_{2}+x_{6}+x_{10}+x_{14}+x_{18}+x_{21}, \\
~~~~~~~~x_{3}+x_{7}+x_{11}+x_{15}+x_{19}+x_{21}, \\
~~~~~~~~x_{4}+x_{8}+x_{12}+x_{16}+x_{20}+x_{21}\}.$\\
With  \mbox{$m=2$} The lifted problem has  \mbox{$K=42,~ D=38$} and ${\cal K}_k$ for $k=1,2,\cdots,42$ is as in \eqref{antidote2}. The lifted code is given by\\
\begin{small}
{
$\mathfrak{C}^{(2)}=\{x_{1}+x_{22}+x_{5}+x_{26}+x_{9}+x_{30}+x_{13}+x_{34}+x_{17}+x_{38}+x_{21}+x_{42},\\
x_{2}+x_{23}+x_{6}+x_{27}+x_{10}+x_{31}+x_{14}+x_{35}+x_{18}+x_{39}+x_{21}+x_{42},\\
x_{3}+x_{24}+x_{7}+x_{28}+x_{11}+x_{32}+x_{15}+x_{36}+x_{19}+x_{40}+x_{21}+x_{42},\\
x_{4}+x_{25}+x_{8}+x_{29}+x_{12}+x_{33}+x_{16}+x_{37}+x_{20}+x_{41}+x_{21}+x_{42}\}.$ \\
}
\end{small}
\end{example}
%%%%%%%%%%%%%%%%%%%%%%%%%%%%%
\begin{corollary}
\label{cor3}
Consider an index coding problem with the consecutive adjacent antidote structure given by \eqref{antidote0} and there is an integer $\lambda$ such that  $K-D$ divides $K+\lambda$ and $\lambda$ divides $K-D.$ An optimal index code for this case is known to be \cite{MRRarXiv} \\
$\mathfrak{C}=\{x_{k}+x_{k+r}+x_{k+2r}+ \dots+x_{k+(q-1)r}+x_{k+(q-1)r+\lambda}\\
~~~~~~+x_{k+(q-1)r+2\lambda}+ \dots + x_{k+(q-1)r+(s-2)\lambda}| k = 1,2,\dots,\lambda\} \\ 
\cup \{x_{k}+x_{k+r}+x_{k+2r}\dots+x_{k+(q-2)r}+x_{k+(q-1)r-\lambda}|\\
~~~~~~~~~~~~~~~~~~~~~~~~~~~~~~~~~~~~~~~~~ k = \{\lambda+1,\lambda+2,\dots,p\}\} \\ 
\cup \{x_{k}+x_{k+r}+x_{k+2r}+ \dots+x_{k+(q-2)r}+x_{k+(q-2)r+\lambda}+x_{k+(q-2)r+2\lambda}+\dots + x_{k+(q-2)r+(s-1)\lambda}|\ k = p+1,p+2,\dots,r\}$\\
where $K-D=r$, $K-D-\lambda=p$, $\frac{K+\lambda}{K-D}=q$  and \mbox{$\frac{K-D}{\lambda}=s$}.\\
For this problem lifting construction gives an index coding problem with  $mK$  number of messages  $\{x_{1},x_{2},\cdots,x_{Km}\}$, and antidotes as in \eqref{antidote2}. The lifted code is given by \\
$\mathfrak{C}^{(m)}=\{x_{k}^\prime+x_{k+r}^\prime+x_{k+2r}^\prime+ \dots+x_{k+(q-1)r}^\prime \\
~~ +x_{k+(q-1)r+\lambda}^\prime+x_{k+(q-1)r+2\lambda}^\prime+ \dots + x_{k+(q-1)r+(s-2)\lambda}^\prime \\ 
~~~~~~~~ | k = 1,2,\dots,\lambda\} \\ 
 \cup \{x_{k}^\prime+x_{k+r}^\prime+x_{k+2r}^\prime\dots+x_{k+(q-2)r}^\prime+x_{k+(q-1)r-\lambda}^\prime \\
~~~~~~~~ |\ k = \lambda+1,\lambda+2,\dots,p\} \\ 
\cup \{x_{k}^\prime+x_{k+r}^\prime+x_{k+2r}^\prime+ \dots+x_{k+(q-2)r}^\prime+x_{k+(q-2)r+\lambda}^\prime+x_{k+(q-2)r+2\lambda}^\prime+\dots + x_{k+(q-2)r+(s-1)\lambda}^\prime \\
~~~~~~~~ |\ k = p+1,p+2,\dots,r\}$\\ where $x_{l}^\prime=x_{l}+x_{l+K}+x_{l+2K}+\dots+x_{l+(r-1)K}$ for $l=1,2,\cdots,K$, $K-D=r$, $K-D-\lambda=p$, $\frac{K+\lambda}{K-D}=q$  and \mbox{$\frac{K-D}{\lambda}=s$}.\\
\end{corollary}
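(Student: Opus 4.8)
The plan is to obtain Corollary \ref{cor3} as an immediate instance of Theorem \ref{thm1} and Theorem \ref{thm2}, so that no new decoding argument is required; essentially all of the work is matching notation and performing one codeword count. First I would observe that the consecutive adjacent antidote structure \eqref{antidote0} is exactly the special case of the generic pattern \eqref{antidote1} obtained by taking $d=D$ and $a_i=i$ for $1\le i\le D$, so that $a_d=D$. Under this identification the three-part code $\mathfrak{C}$ quoted from \cite{MRRarXiv} is, by hypothesis, a valid optimal scalar linear index code for the starting ($m=1$) problem; I would take its validity and optimality as given rather than re-deriving it.

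Next I would invoke Theorem \ref{thm1} directly. In the present $x$-notation the theorem's substitution amounts to replacing each starting-problem message $x_l$ by $x_l'=x_l+x_{l+K}+\dots+x_{l+(m-1)K}$. Carrying out this substitution termwise in each of the three index ranges $k\in\{1,\dots,\lambda\}$, $k\in\{\lambda+1,\dots,p\}$, and $k\in\{p+1,\dots,r\}$ of $\mathfrak{C}$ reproduces verbatim the three-part expression displayed for $\mathfrak{C}^{(m)}$. The point that makes the displayed form come out unchanged is the periodicity \eqref{primestructure}, namely $x_l'=x_{l+K}'=\dots=x_{l+(m-1)K}'$: because $x_l'$ depends only on $l$ modulo $K$, the substitution commutes with reducing the shifted indices $k+r,\dots,k+(q-1)r$ and $k+(q-2)r+j\lambda$ modulo $K$, so every index shift by $r$ or by $\lambda$ is preserved. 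Theorem \ref{thm1} then certifies that $\mathfrak{C}^{(m)}$ is a valid scalar linear index code for the lifted antidote pattern \eqref{antidote2}, of the same length as $\mathfrak{C}$.

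It remains to establish optimality via Theorem \ref{thm2}, which is where the only genuine arithmetic sits. To apply that theorem I must verify that the length of $\mathfrak{C}$ equals $K-a_d=K-D$. Counting the codewords across the three parts gives $\lambda+(p-\lambda)+(r-p)=r$, and since $r=K-D$ by definition (and $p=K-D-\lambda=r-\lambda$), the hypothesis $l=K-a_d$ of Theorem \ref{thm2} is met. Theorem \ref{thm2} then yields at once that $\mathfrak{C}$ and every lifted code $\mathfrak{C}^{(m)}$ are of optimal length, hence capacity achieving; the divisibility conditions $K-D\mid K+\lambda$ and $\lambda\mid K-D$ serve only to make the original code well-defined and of length $r$, and they are inherited unchanged by the lifted problem since the substitution leaves the index arithmetic modulo $K$ untouched.

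The main obstacle I anticipate is bookkeeping rather than anything conceptual: one must check that applying $x_l\mapsto x_l'$ to the somewhat intricate three-part code lands exactly on the stated form, and that the codeword count telescopes to $r=K-D$. Both are routine once the periodicity \eqref{primestructure} is in hand, so the corollary reduces to a verification that the specialization hypotheses ($d=D$, $a_i=i$) and the length condition of Theorem \ref{thm2} are satisfied.
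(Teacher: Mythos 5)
Your proposal is correct and follows essentially the same route as the paper: Corollary~\ref{cor3} is presented there precisely as the specialization of Theorem~\ref{thm1} with $d=D$, $a_i=i$ (so $a_d=D$), with the lifted code obtained by the substitution $x_l \mapsto x_l^\prime$, and with optimality inherited via Theorem~\ref{thm2} since the three parts of $\mathfrak{C}$ contain $\lambda+(p-\lambda)+(r-p)=r=K-D=K-a_d$ codewords. Your codeword count making the hypothesis of Theorem~\ref{thm2} explicit (and your implicit correction of the paper's typo $(r-1)K$ to $(m-1)K$ in the definition of $x_l^\prime$) is exactly the verification the paper leaves to the reader.
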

%%%%%%%%%%%%%%%%%%%%%%%%%%%%%%%%%%%%
Corollary \ref{cor3} is illustrated in the following example.
%%%%%%%%%%%%%%%%%%%%%%%%%%%%%%%%%%%%
\begin{example}
\label{ex3}
To illustrate Corollary \eqref{cor3} consider $K=28,~ D=18,~ \lambda=2, m=1$ and ${\cal K}_k$ for $k=1,2,\cdots,28$ is as in \eqref{antidote0}. We get the code\\
$\mathfrak{C}=\{x_{1}+x_{11}+x_{21}+x_{23}+x_{25}+x_{27},\\ x_{2}+x_{12}+x_{22}+x_{24}+x_{26}+x_{28},\\ 
x_{3}+x_{13}+x_{21},~~ x_{4}+x_{14}+x_{22},~~ x_{5}+x_{15}+x_{23},\\ 
x_{6}+x_{16}+x_{24},~~ x_{7}+x_{17}+x_{25},~~ x_{8}+x_{18}+x_{26},\\ 
x_{9}+x_{19}+x_{21}+x_{23}+x_{25}+x_{27},\\ 
x_{10}+x_{20}+x_{22}+x_{24}+x_{26}+x_{28}\}.$\\

Lifting construction with $m=2$ leads to the index coding problem with  \mbox{$K=56$} and ${\cal K}_k$ for $k=1,2,\cdots,56$ is as in \eqref{antidote2}. The lifted code is given by\\
\begin{small}
{
$\mathfrak{C}^{(2)}=\{x_{1}+x_{29}+x_{11}+x_{39}+x_{21}+x_{49}+x_{23}+x_{51}+x_{25}+x_{53}+x_{27}+x_{55},\\
x_{2}+x_{30}+x_{12}+x_{40}+x_{22}+x_{50}+x_{24}+x_{52}+x_{26}+x_{54}+x_{28}+x_{56},\\
x_{3}+x_{31}+x_{13}+x_{41}+x_{21}+x_{49},\\
x_{4}+x_{32}+x_{14}+x_{42}+x_{22}+x_{50},\\
x_{5}+x_{33}+x_{15}+x_{43}+x_{23}+x_{51},\\
x_{6}+x_{34}+x_{16}+x_{44}+x_{24}+x_{52},\\
x_{7}+x_{35}+x_{17}+x_{45}+x_{25}+x_{53},\\
x_{8}+x_{36}+x_{18}+x_{46}+x_{26}+x_{54},\\
x_{9}+x_{37}+x_{19}+x_{47}+x_{21}+x_{49}+x_{23}+x_{51}+x_{25}+x_{53}+x_{27}+x_{55},\\
x_{10}+x_{38}+x_{20}+x_{48}+x_{22}+x_{50}+x_{24}+x_{52}+x_{26}+x_{54}+x_{28}+x_{56}.$\\
}
\end{small}
\end{example}
%%%%%%%%%%%%%%%%%%%%%%%%%%%%%%%%%%%%%%%%%%%%%%%%%%%%

%%%% ??????  %%%%%%%%%%%%%%%%%%%%%%%%%%%%%%%%%%%%%%%%%%%%%%%%%%%%%%%%%%%%%%%%%%%%%%%%%%%%%%%%%%%%%
\noindent
\section{Interrelationship among some classes of problems}
\label{sec4}

In this subsection we identify (i) a class of uniprior problems whose appropriate lifting gives another known class of known problems and (ii) two classes of problems which when lifted give the same classes of problems. 

%%%%%%%%%%%%%%%%%%%%%%%%%%%%%%%%%%%%%%%%%
\begin{proposition}
\label{pro1}
The index coding problem {\it Case (a):}  $D$ divides $K$ and ${\cal K}_k =\{x_{k+D}\}$ when lifted with $m=2$  leads to the class of problems defined as follows:
{\it Case (b):} $D-\frac{K}{2}$ divides $\frac{K}{2} $ and \\
 $${\cal K}_k=\{x_{k+\frac{K}{2}},x_{k+D-\frac{K}{2}},x_{k+D}\}.$$
\end{proposition}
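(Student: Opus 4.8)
The plan is to instantiate the lifting construction of Theorem~\ref{thm1} with the parameters of Case~(a) and then re-read the resulting antidote set as a Case~(b) problem, the only real work being a consistent relabelling of the parameters $K$ and $D$. Since Case~(a) is the $d=1$ specialisation with $a_1 = D$, I would first write down the lifted antidote explicitly from \eqref{antidote2}. For $m=2$ the first row of \eqref{antidote2} contributes only $x_{k+K}$ (because $m-1=1$), while the $a_1=D$ row contributes $x_{k+D}$ and $x_{k+D+K}$; hence the lifted problem has $2K$ messages and
\[
{\cal K}_k = \{x_{k+K},\, x_{k+D},\, x_{k+D+K}\}, \qquad 1 \le k \le 2K .
\]

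Next I would carry out the parameter identification. Let $K' = 2K$ denote the number of messages in the lifted problem and set $D' = D + K$. Then $K'/2 = K$ and $D' - K'/2 = D$, so the three offsets $K$, $D$, $D+K$ appearing above become $K'/2$, $D'-K'/2$, and $D'$ respectively. Consequently the lifted antidote reads
\[
{\cal K}_k = \{x_{k+K'/2},\, x_{k+D'-K'/2},\, x_{k+D'}\},
\]
which is exactly the antidote template of Case~(b) expressed in terms of the lifted problem's own parameters $(K',D')$.

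It then remains to check the divisibility hypothesis of Case~(b). The requirement that $D' - \tfrac{K'}{2}$ divide $\tfrac{K'}{2}$ becomes, after substituting $D'-K'/2 = D$ and $K'/2 = K$, the statement $D \mid K$, which is precisely the hypothesis of Case~(a). Thus every lift of a Case~(a) problem is a genuine Case~(b) problem. I would close by observing that the identification runs in reverse as well: given a Case~(b) problem with (necessarily even) parameter $K'$, putting $K = K'/2$ and $D = D' - K'/2$ recovers a Case~(a) instance whose $m=2$ lift is the given problem, so the two families correspond exactly.

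I do not expect any genuine obstacle; the content is a bookkeeping of index offsets built on top of Theorem~\ref{thm1}. The single point requiring care, and where an error would most easily slip in, is keeping the two meanings of the symbols $K$ and $D$ distinct, since the proposition reuses them for both the starting and the lifted problem. Introducing the primed parameters $K' = 2K$ and $D' = D + K$ once, and substituting consistently thereafter, removes this ambiguity and makes every step a direct verification.
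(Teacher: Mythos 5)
Your proof is correct and takes essentially the same route as the paper's: write out the $m=2$ lifted antidote $\{x_{k+K},\,x_{k+D},\,x_{k+D+K}\}$ on $2K$ messages and identify it, via $K'=2K$ and $D'=K+D$, with the Case~(b) template; indeed, your explicit check that ``$D'-\tfrac{K'}{2}$ divides $\tfrac{K'}{2}$'' reduces to $D \mid K$ makes precise a step the paper passes over with ``clearly''. The only element of the paper's proof you omit is its closing observation that the substitution $y_i \mapsto x_i + x_{i+K}$ carries the known optimal Case~(a) code onto the known optimal Case~(b) code; this concerns the codes rather than the problem class and is not required by the literal statement, which your argument fully covers.
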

\begin{proof}
It is known \cite{MRRarXiv} that a problems of {\it Case (a)} has the optimal index code \\
{\small
$$\mathfrak{C}_a=\{x_{i+(j-1)D}+x_{i+jD}|~i=1,2,\dots,D,~ j=1,2,\dots, \frac{K}{D}-1\}.$$
}
and a problem of {\it Case (b)} has the optimal length code \\ 
$\mathfrak{C}_b=\{x_{i+jr}+x_{\frac{K}{2}+i+jr}+x_{i+(j+1)r}+x_{\frac{K}{2}+i+(j+1)r} \\
~~~~~~~~~~  ~| ~i =\{1,2,\dots,r\},\ j=\{0,1,2,\dots,n-2\}\}$ \\
where $D-\frac{K}{2}=r$ and $\frac{K/2}{D-\frac{K}{2}}=n$.

For the given code with $K$ messages symbols and ${\cal K}_k = x_{k+D},$ when lifted with $m=2$ results in the number of messages being $2K$ and ${\cal K}_k = \{x_{k+K},~ x_{k+D},~ x_{k+D+K}\}.$  Clearly the resulting code is of type {\it Case (b)}. The number of antidotes are $2K-(K-D)=K+D$. The code in {\it Case (a)} is given by $\mathfrak{C}=\{y_{i+(j-1)D}+y_{i+jD}|~i=1,2,\dots,D,~ j=1,2,\dots, \frac{K}{D}-1\}.$ \\
If we replace every symbol $y_{i}$ in the above code with $x_{i}+x_{i+K}$, we get the code for the {\it Case (b)}.
\end{proof}
%%%%%%%%%%   The following proposition involves dual code !!! %%%%%%%%%
%\begin{proposition}
%\label{lem2}
%Let $\mathfrak{C_{3}}$ is the code for given $K$ and $D$ satisfying the conditions in $case III$ and let $\mathfrak{C_{4}}$ is the code for given $K'=K$ and $D'=K-D$ satisfying the conditions in $case IV$. $\mathfrak{C_{3}}$ and $\mathfrak{C_{4}}$ are dual to each other if the message symbols are from the field of characteristic two.
%\end{proposition}
%%
%\begin{proof}
%The $L$ matrix for $(\mathfrak{C_{3}})$ is a four way diagonal matrix with the order $K \times K-D$ and with four ones in each column appearing in $i^{th}, (i+D)^{th}, (i+\frac{K}{2})^{th}$ and $(i+\frac{K}{2}+D)^{th}$ positions in a cyclic manner.  The $L$ matrix for $(\mathfrak{C_{4}})$ is a $\frac{D}{\frac{K}{2}-D}$ diagonal matrix with the order $K' \times K'-D'=K \times D$ and with two ones in each column are separated by $\frac{K}{2}-D$ positions.
%Every column  in the $L$ matrix of $\mathfrak{C_{3}}$ overlaps in either zero positions or two positions with every column of $L$ matrix of $\mathfrak{C_{4}}$. Thus every column of $L$ matrix of $\mathfrak{C_{3}}$ is orthogonal to every column of $L$ matrix of $\mathfrak{C_{4}}$ over the field of characteristic two. The rank of $L$ matrix for $\mathfrak{C_{3}}$ is $K-D$ and the rank of the $L$ matrix for $\mathfrak{C_{4}}$ is $D$. The sum of rank of both the matrices is $K$ implies that one code is the dual of other code.
%\end{proof}
%%%%%%%%%%%%
From Corollary \ref{cor1} and Example \ref{ex1} we see that lifting construction gives new index coding problems which do not belong to the class of problems which is lifted. However, for the problems of {\bf Case II} and {\bf Case VIII} in \cite{MRRarXiv} the lifting construction leads to problems of the same class. This is shown in the following two propositions.

%%%%%%%%%%%%%%%%%%%%%%%%%%%
\begin{proposition}
\label{pro2}
Consider the index coding problem:  \mbox{$K-D$} divides $K$ and the antidote pattern is given by \eqref{antidote2a}.
\begin{equation}
\label{antidote2a}
{\cal K}_k=\{x_{k+K-D},x_{k+2(K-D)},...,x_{k+D}\}
\end{equation}
An optimal length index code for this problem is \cite{MRRarXiv}
 $${\mathfrak{C}}=\{x_{i}+x_{i+r}+\dots+x_{i+(n-1)r}\ |\ {i = \{1,2, \dots, r\}\}}$$ where  $K-D=r$ and \mbox{$\frac{K}{K-D}=n$}. 
For this class of problems lifting construction gives raise to  index coding problems of the same case. In other words, a  problem of this class with parameters $K$ and $D$ under lifting construction for any $m$ leads to problems with parameters  $K^\prime$ and $D^\prime$, which falls in the same case and same antidote structure. 
%That is, lifting of the codes in $case II$ and $case VIII$ can not exploit any values of $K'$ and $D'$ which does not fall in any of the given ten cases.
\end{proposition}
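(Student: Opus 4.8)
The plan is to feed this Case~II problem into Theorem~\ref{thm1} and to show that the lifted antidote pattern \eqref{antidote2} collapses back to a consecutive--multiples--of--$r$ pattern, i.e. to Case~II again, with new parameters that I will pin down explicitly. First I would record that Case~II is exactly the instance of Theorem~\ref{thm1} with $d=n-1$ and $a_j=jr$ for $1\le j\le n-1$, where $r=K-D$ and $n=K/(K-D)$, so that $K=nr$ and $D=(n-1)r$. Substituting these values of $a_j$ into \eqref{antidote2} gives a lifted problem on $mK$ messages whose antidote offsets are the numbers $tK$ for $1\le t\le m-1$ together with $jr+tK$ for $1\le j\le n-1$ and $0\le t\le m-1$. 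The claim to be proved is that, after setting $K'=mK$ and $D'=(m-1)K+D$, this offset set is precisely $\{r',2r',\dots,(n'-1)r'\}$ with $r'=K'-D'=r$ and $n'=K'/(K'-D')=mn$, so that the lifted problem is again of the Case~II form.

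The heart of the argument is a mixed--radix (base--$n$) bookkeeping step. Since $K=nr$, every offset above is a multiple of $r$: the offset $jr+tK$ equals $(j+tn)r$, while $tK=(0+tn)r$. Writing each offset as $(j+tn)r$, the antidote offsets correspond exactly to the pairs $(j,t)\in\{0,1,\dots,n-1\}\times\{0,1,\dots,m-1\}$ with $(j,t)\neq(0,0)$. The map $(j,t)\mapsto j+tn$ is a bijection from $\{0,\dots,n-1\}\times\{0,\dots,m-1\}$ onto $\{0,1,\dots,mn-1\}$, so deleting the pair $(0,0)$ (the desired message itself) leaves the offsets $r\cdot\{1,2,\dots,mn-1\}$, i.e. exactly $\{r,2r,\dots,(mn-1)r\}$. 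Recognizing $mn=n'$ and $(mn-1)r=(n'-1)r'=D'$ then confirms the consecutive pattern, and the divisibility requirement $K'-D'\mid K'$ holds because $r\mid mnr$. I expect this bijection --- the verification that the ``vertical'' offsets $tK$ and the ``shifted'' offsets $jr+tK$ tile $\{1,\dots,mn-1\}$ with neither gaps nor repetitions --- to be the only genuinely nontrivial point; everything else is substitution.

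Finally I would confirm that the code produced by Theorem~\ref{thm1} is literally the Case~II optimal code for the parameters $(K',D')$. Expanding $\mathfrak{C}^{(m)}=\{x'_i+x'_{i+r}+\dots+x'_{i+(n-1)r}\mid i=1,\dots,r\}$ with $x'_l=\sum_{t=0}^{m-1}x_{l+tK}$ and using $K=nr$ gives, for each fixed $i$,
\begin{equation*}
\sum_{j=0}^{n-1}x'_{i+jr}=\sum_{j=0}^{n-1}\sum_{t=0}^{m-1}x_{i+(j+tn)r}=\sum_{s=0}^{mn-1}x_{i+sr}=x_i+x_{i+r'}+\dots+x_{i+(n'-1)r'},
\end{equation*}
where the middle equality uses the same bijection $(j,t)\mapsto j+tn$. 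As $i$ ranges over $1,\dots,r=r'$, this is exactly the Case~II generator set for $(K',D')$, so the lifted problem and its optimal code lie in the same class. Optimality is inherited from Theorem~\ref{thm2}, since $a_d=(n-1)r=D$ gives length $K-a_d=r$ matching the bound; but it is the structural identification above that actually proves the proposition.
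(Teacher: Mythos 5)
Your proposal is correct and follows essentially the same route as the paper: set $K'=mK$, $D'=(m-1)K+D$, note $K'-D'=K-D$ divides $K'=mK$, and identify the lifted antidote pattern of \eqref{antidote2} with the Case~II pattern for the parameters $(K',D')$. The only difference is one of rigor: where the paper simply lists the two antidote patterns and asserts they are ``exactly same,'' you supply the actual verification via the bijection $(j,t)\mapsto j+tn$ (and additionally expand $\mathfrak{C}^{(m)}$ to show it is literally the Case~II code for $(K',D')$, a check the paper performs only in Proposition~\ref{pro1}), so your write-up fills in details the paper leaves implicit rather than taking a different path.
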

\begin{proof}
For the problem to be lifted the  number of messages is $K$ and the maximum difference between the indices of required message and antidote for each of the receiver is $a_d=D.$ The condition that the parameters $K$ and $D$ satisfy is that $K-D$ divides $K$. After lifting, number of messages is $K^\prime = mK$ and the maximum difference between the indices of required message and antidote for each of the receiver is $D^\prime=(m-1)K+D.$ We have $K^\prime-D^\prime=K-D$. Because $K-D$ divides $K$, $K^\prime-D^\prime=K-D$ divides $mK$ and thus the values of $K^\prime$ and $D^\prime$ satisfy the requirement of the class under consideration.  The antidote pattern after lifting is
\begin{equation}
\small
{\cal K}_k=
                \begin{array}{ll}
                  \ \{x_{k+K},x_{k+2K},\cdots,x_{k+(m-1)K}, \\
                  \ x_{k+(K-D)},x_{k+(K-D)+K},\cdots,x_{k+(K-D)+(m-1)K},\\
                  \ x_{k+2(K-D)},x_{k+2(K-D)+K},\cdots,x_{k+2(K-D)+(m-1)K},\\
                  \ ~~~~~~~~~~~~~~~~~~~~~~~~ \vdots \\
                  \ x_{k+p(K-D)},x_{k+p(K-D)+K},\cdots,x_{k+p(K-D)+(m-1)K} \} \\
                  \ ~~~~~~~~~~~~~~~~~~~~~~~~~~~~~~~~where \ p=\frac{D}{K-D}
                  \end{array} 
\label{adcase2lifting}           
\end{equation}
for $1 \leq k \leq mK.$  The antidote pattern for the problem of the given class  with parameters $mK$ and $(m-a)K+D$  is
\begin{equation}
{\cal K}_k=\{x_{k+K-D},x_{k+2(K-D)},x_{k+2(K-D)},...,x_{k+(m-1)K+D}\}.
\label{adcase2nolifting} 
\end{equation}
The antidote patterns mentioned in \eqref{adcase2lifting} and \eqref{adcase2nolifting} are exactly same. 
This completes the proof. 
\end{proof}
%%%%%%%%%%%%%%%%%%%%%%%%%%%%%%%%
\begin{proposition}
\label{pro3}
Consider the class of index coding problems: There is an integer $\lambda$ such that $K-D+\lambda$ divides $K$ and $\lambda$ divides $k-D$ and
\begin{equation}
\label{antidote8}
{\cal K}_k=\left\{
                \begin{array}{ll}
                  \{x_{k+\lambda},\ x_{k+\lambda+(K-D)},\\x_{k+2\lambda+(K-D)},\ x_{k+2\lambda+2(K-D)},\\ ~~~~~~~~~\vdots ~~~~~~~~~~~~~~~~~~~ \vdots\\x_{k+(p-1)\lambda+(p-2)(K-D)},\ x_{k+(p-1)\lambda+(p-1)(K-D)},\\x_{k+p\lambda+(p-1)(K-D)}\}\\
                  \end{array}
              \right.
\end{equation}
where $p=\frac{K}{K-D+\lambda}$.\\
For a index coding problem of this class lifting construction creates index coding problems of the same class.
% In other words, for a problem of {\bf Case VIII} with parameters $K$ and $D$ the lifting construction with any value for  $m$ leads to a problem with parameters  $K^\prime$ and $D^\prime$, which falls in the same case  and same antidote structure.
\end{proposition}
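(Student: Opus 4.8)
The plan is to mirror the argument of Proposition \ref{pro2}: exhibit the parameters of the lifted problem, check that they satisfy the defining divisibility conditions of the class, and then verify that the antidote pattern produced by the lifting construction of Theorem \ref{thm1} coincides term-by-term with \eqref{antidote8} written for the new parameters. First I would set the lifted parameters $K' = mK$, $D' = (m-1)K + D$ and $\lambda' = \lambda$, so that $K' - D' = K - D$. Membership in the class is then immediate: $K' - D' + \lambda' = K - D + \lambda$ divides $K$ by hypothesis and hence divides $mK = K'$, while $\lambda' = \lambda$ divides $K' - D' = K - D$. Consequently $p' := \frac{K'}{K'-D'+\lambda'} = \frac{mK}{K-D+\lambda} = mp$.

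The heart of the argument is a reindexing identity. I would introduce the offset sequence $b_0 = 0$ and, for $j \ge 1$, $b_j$ obtained by alternately adding $\lambda$ and $K-D$ starting with $\lambda$; explicitly $b_{2i-1} = i\lambda + (i-1)(K-D)$ and $b_{2i} = i\lambda + i(K-D)$. With this notation the antidotes in \eqref{antidote8} are exactly $\{b_1, b_2, \ldots, b_{2p-1}\}$, and one checks $b_{2p-1} = p\lambda + (p-1)(K-D) = D$ using $p(K-D+\lambda) = K$. The same identity yields the crucial periodicity $b_{j+2p} = b_j + K$, since a block of $2p$ consecutive increments adds $p(K-D) + p\lambda = p(K-D+\lambda) = K$; in particular $b_{2np} = nK$ for all $n$.

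I would then express both offset sets in terms of the $b_j$. By Theorem \ref{thm1} the lifted antidotes \eqref{antidote2} consist of the zero-block offsets $K, 2K, \ldots, (m-1)K$, i.e. $b_{2p}, b_{4p}, \ldots, b_{2(m-1)p}$, together with the shifted copies $a_i + nK$ of each original offset $a_i \in \{b_1, \ldots, b_{2p-1}\}$ for $0 \le n \le m-1$; by periodicity $a_i + nK = b_{\,\mathrm{ind}(a_i) + 2np}$, where $\mathrm{ind}(a_i)$ denotes the position of $a_i$ in the sequence. As $n$ ranges over $0, \ldots, m-1$ and $\mathrm{ind}(a_i)$ over $1, \ldots, 2p-1$, the indices $\mathrm{ind}(a_i) + 2np$ sweep out precisely $\{1, \ldots, 2mp-1\}$ with the multiples of $2p$ deleted, and those deleted multiples are exactly the zero-block indices just listed. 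Hence the full lifted offset set equals $\{b_1, \ldots, b_{2mp-1}\} = \{b_1, \ldots, b_{2p'-1}\}$, which is the antidote pattern \eqref{antidote8} for $K', D', \lambda', p'$. This proves the lifted problem lies in the same class; optimality of the lifted code is then inherited from Theorems \ref{thm1} and \ref{thm2}, since $a_d = D$.

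The step I expect to be the main obstacle is the bookkeeping in the last paragraph: one must verify that the $m-1$ extra zero-block offsets introduced by the lifting fill in exactly the gaps (the multiples of $2p$) left by the shifted copies of the original antidotes, so that the union is a contiguous run $b_1, \ldots, b_{2mp-1}$ with no repetitions or omissions. A clean counting check --- $(2p-1)m + (m-1) = 2mp - 1 = 2p' - 1$ offsets --- confirms that the cardinalities agree and guards against an off-by-one error in the interleaving.
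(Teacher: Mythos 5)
Your proposal is correct and follows essentially the same route as the paper's proof: set $K'=mK$, $D'=(m-1)K+D$, verify the divisibility conditions for the lifted parameters, and then identify the lifted antidote pattern of \eqref{antidote2} with the pattern \eqref{antidote8} written for $K'$, $D'$. The only difference is one of rigor: where the paper simply displays the two patterns and asserts they coincide, your offset sequence $b_j$ with the periodicity $b_{j+2p}=b_j+K$ supplies the explicit interleaving argument (the $m-1$ zero-block offsets filling the gaps at indices $2p,4p,\dots,2(m-1)p$) that the paper leaves implicit.
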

%%    ^\prime   
\begin{proof}
Before lifting, the condition required to satisfy in the class is $K-D+\lambda$ divides $K$ and $\lambda$ divides $K-D$. After lifting, we have $K^\prime =mK$, $D^\prime=(m-1)K+D$ and $K^\prime-D^\prime=K-D$. Because  $K-D+\lambda$ divides $K$, $K^\prime-D^\prime+\lambda=K-D+\lambda$ divides $K$ and thus the values of $K^\prime$ and $D^\prime$ satisfy the requirement to fall in the given class. 
The antidote pattern required after lifting is
\begin{equation}
\small
{\cal K}_k=
                \begin{array}{ll}
                  \ \{x_{k+K},x_{k+2K},\cdots,x_{k+(m-1)K}, \\
                  \ x_{k+\lambda},x_{k+\lambda+K},\cdots,x_{k+\lambda+(m-1)K}, \\
                  \ x_{k+\lambda+(K-D)},x_{k+\lambda+(K-D)+K},\cdots,x_{k+\lambda+(K-D)+(m-1)K},\\
                  \ x_{k+2\lambda+(K-D)},x_{k+2\lambda+(K-D)+K},\cdots,x_{k+2\lambda+(K-D)+(m-1)K},\\
                  \ ~~~~~~~~~~~~~~~~~~~~~~~~ \vdots \\
                  \ x_{k+D},x_{k+D+K},\cdots,x_{k+D+(m-1)K} \} \\
                  \end{array}  
\label{adcase8lifting}           
\end{equation}

%The antidote pattern by considering the lifted index coding problem is
Repeating \eqref{antidote8} with $K$ replaced by $mK$ and $D$ replaced by $(m-1)K+D$ gives
\begin{equation}
{\cal K}_k=\{x_{k+\lambda},x_{k+\lambda+K-D},x_{k+2\lambda+K-D},...,x_{k+(m-1)K+D}\}
\label{adcase8nolifting} 
\end{equation}
which is same as \eqref{adcase8lifting}, i.e., the antidote patterns mentioned in \eqref{adcase8lifting} and \eqref{adcase8nolifting} are exactly same. This completes the proof.
\end{proof}
%%%%%%%%%%%%%%%%%%%%%%%%%%%%%%%%%%%%
\section{Lifting construction for four classes}
\label{sec5}

In this section we demonstrate the lifting construction for four more classes of index coding problems apart from the classes of problems discussed in the previous sections. These four classes have known optimal index codes \cite{MRRarXiv} and generate new classes of index coding problems under lifting which have explicit optimal linear index codes. \\

\noindent
{\bf Class (i):} Consider the index coding problems for which  $\frac{K}{2}-D$ divides $D$ with the antidote pattern given in \eqref{antidote4}, 
\begin{equation}
\label{antidote4}
{\cal K}_k=\{x_{k+\frac{K}{2}-D},x_{k+2(\frac{K}{2}-D)}+ \cdots +x_{k+D}\}.
\end{equation}
the optimal proposed scalar linear code is \cite{MRRarXiv},\\
$\mathfrak{C}=\{x_{i}+x_{i+r}+\dots+x_{i+pr}, \\ 
~~~~~~~~ x_{i+r}+x_{i+2r}+\dots+x_{i+(p+1)r}, \\
~~~~~~~~~~~~~~~~~~~~ \vdots \\
~~~~~~~~ x_{i+r(p+1)}+x_{i+r(p+2)}+\dots+x_{i+(n-1)r} \\
~~~~~~~~~~~~~~~~~~~~~~~ | i=\{1,2, \dots ,r\}\}$

\noindent
where  $r=\frac{K}{2}-D$, $n=\frac{K}{\frac{K}{2}-D}$ and $\frac{D}{\frac{K}{2}-D}=p$. \\
\begin{corollary}
\label{cor4}
For this class of codes lifting construction leads to the class with $mK$ messages and the antidote structure 
$${\cal K}_k=\left\{
                \begin{array}{ll}
                  \ x_{k+K},x_{k+2K},x_{k+3K},\cdots,x_{k+(m-1)K} \\
                  \ x_{k+r},x_{k+r+K},x_{k+r+2K},\cdots,x_{k+r+(m-1)K}\\
                  \ x_{k+2r},x_{k+2r+K},x_{k+2r+2K},\cdots,x_{k+2r+(m-1)K} \\
                  \ ~~~~~~~~~~~~~~~~~~~~~ \vdots \\
                  \ x_{k+pr},x_{k+pr+K},x_{k+pr+2K},\cdots,x_{k+pr+(m-1)K}. \\
                  \end{array}
              \right\}
$$
 The optimal length scalar lifted index  code is given by,\\
$\mathfrak{C}=\{x_{i}^\prime +x_{i+r}^\prime +\dots+x_{i+pr}^\prime , \\ 
~~~~~~~~ x_{i+r}^\prime +x_{i+2r}^\prime +\dots+x_{i+(p+1)r}^\prime , \\
~~~~~~~~~~~~~~~~~~~~ \vdots \\
x_{i+r(p+1)}^\prime +x_{i+r(p+2)}^\prime +\dots+x_{i+r(n-1)}^\prime  
 | i=1,2, \dots ,r\}$
where  $r=\frac{K}{2}-D$, $n=\frac{K}{\frac{K}{2}-D}$, $\frac{D}{\frac{K}{2}-D}=p$ and $x_{j}^\prime =x_{j}+x_{j+K}+x_{j+2K}+\dots+x_{j+(m-1)K}$ for $j=1,2,\cdots,K$. \\
\end{corollary}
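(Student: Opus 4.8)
The plan is to obtain Corollary \ref{cor4} as a direct specialization of Theorems \ref{thm1} and \ref{thm2}, so that beyond matching parameters almost no new argument is required. First I would observe that the antidote pattern \eqref{antidote4} of \textbf{Class (i)} is exactly the pattern \eqref{antidote1} of Theorem \ref{thm1} with $d=p$ and $a_i = ir$ for $i=1,2,\dots,p$. Indeed, writing $r=\frac{K}{2}-D$ and using $p=\frac{D}{r}$, the antidotes $x_{k+r},x_{k+2r},\dots,x_{k+D}$ are precisely $x_{k+a_1},\dots,x_{k+a_d}$ with $a_d=pr=D$. This single identification is what drives the whole corollary.

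Given this matching, the lifted antidote structure is read off from \eqref{antidote2} simply by substituting $a_i=ir$, which reproduces the displayed antidote pattern of the corollary verbatim (the first line $x_{k+K},\dots,x_{k+(m-1)K}$ being the lifting copies of $x_k$, and the line indexed by $j$ being the copies of $x_{k+jr}$). Likewise, Theorem \ref{thm1} asserts that $\mathfrak{C}^{(m)}$ is obtained from $\mathfrak{C}$ by the substitution $y_j\mapsto x_j'=\sum_{i=0}^{m-1}x_{j+iK}$; applying this term by term to the stated code $\mathfrak{C}$ of \textbf{Class (i)} yields exactly the expression for $\mathfrak{C}^{(m)}$ in the corollary, and Theorem \ref{thm1} guarantees it is a valid scalar linear index code of the same length as $\mathfrak{C}$.

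It then remains to establish optimality, for which I would invoke Theorem \ref{thm2}. That theorem requires only that the starting code $\mathfrak{C}$ have length $l=K-a_d=K-D$. I would verify this by counting the code symbols of $\mathfrak{C}$: for each $i\in\{1,\dots,r\}$ there are $p+2$ sums, their first terms carrying offsets $0,r,2r,\dots,(p+1)r$, giving $r(p+2)$ symbols in total. Using $K=2(D+r)=2r(p+1)$ one checks $r(p+2)=2r(p+1)-pr=K-D$, so $l=K-D=K-a_d$ and Theorem \ref{thm2} immediately delivers optimality of $\mathfrak{C}$ and of every lifted code $\mathfrak{C}^{(m)}$.

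The argument is essentially bookkeeping, and the only step requiring genuine care is the parameter matching together with the length count of the last paragraph. One must confirm that $n=2(p+1)$, so that the final sum $x_{i+r(p+1)}'+\dots+x_{i+(n-1)r}'$ indeed has $p+1$ terms like every other sum, and that the overlapping sums defining $\mathfrak{C}$ number exactly $r(p+2)=K-D$. Once these identities are in hand, no further estimation is needed and the corollary follows.
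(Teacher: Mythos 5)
Your proposal is correct and follows exactly the route the paper intends: Corollary \ref{cor4} is stated without a separate proof precisely because it is the specialization of Theorem \ref{thm1} (with $d=p$ and $a_i=ir$, so $a_d=pr=D$) combined with the optimality criterion of Theorem \ref{thm2}. Your explicit bookkeeping --- checking $n=2(p+1)$ and that the starting code of \textbf{Class (i)} has length $r(p+2)=K-D=K-a_d$ --- is the only substantive verification needed, and you carry it out correctly.
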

%%%%-----------------------------------

\begin{example}
\label{ex4}
$K=20,\ D=8,\ m=1$ and $\mathcal{K}_{k}$ for $k=1,2,\dots,20$ is $\{x_{k+2}, x_{k+4},x_{k+6},x_{k+8}\}$. The proposed code is\\
$\mathfrak{C}=\{x_{1}+x_{3}+x_{5}+x_{7}+x_{9}, ~~ x_{2}+x_{4}+x_{6}+x_{8}+x_{10}, \\
x_{3}+x_{5}+x_{7}+x_{9}+x_{11}, ~~ x_{4}+x_{6}+x_{8}+x_{10}+x_{12},\\ 
x_{5}+x_{7}+x_{9}+x_{11}+x_{13}, ~~ x_{6}+x_{8}+x_{10}+x_{12}+x_{14},\\ 
x_{7}+x_{9}+x_{11}+x_{13}+x_{15}, ~~ x_{8}+x_{10}+x_{12}+x_{14}+x_{16},\\
x_{9}+x_{11}+x_{13}+x_{15}+x_{17}, ~~ x_{10}+x_{12}+x_{14}+x_{16}+x_{18},\\
x_{11}+x_{13}+x_{15}+x_{17}+x_{19}, ~~ x_{12}+x_{14}+x_{16}+x_{18}+x_{20}\}$.
\\

Let $m$=2. Then \mbox{$K=40,~ D=28$} and ${\cal K}_k$ for $k=1,2,\cdots,40$ is\\ \mbox{$\{x_{k+2}, x_{k+4},x_{k+6},x_{k+8},x_{k+20},x_{k+22}, x_{k+24},x_{k+26},x_{k+28}\}$}. The code is given by\\
$\mathfrak{C}=\{x_{1}+x_{21}+x_{3}+x_{23}+x_{5}+x_{25}+x_{7}+x_{27}+x_{9}+x_{29},\\
x_{2}+x_{22}+x_{4}+x_{24}+x_{6}+x_{26}+x_{8}+x_{28}+x_{10}+x_{30},\\
x_{3}+x_{23}+x_{5}+x_{25}+x_{7}+x_{27}+x_{9}+x_{29}+x_{11}+x_{31},\\
x_{4}+x_{24}+x_{6}+x_{26}+x_{8}+x_{28}+x_{10}+x_{30}+x_{12}+x_{32},\\
x_{5}+x_{25}+x_{7}+x_{27}+x_{9}+x_{29}+x_{11}+x_{31}+x_{13}+x_{33},\\
x_{6}+x_{26}+x_{8}+x_{28}+x_{10}+x_{30}+x_{12}+x_{32}+x_{14}+x_{34},\\
x_{7}+x_{27}+x_{9}+x_{29}+x_{11}+x_{31}+x_{13}+x_{33}+x_{15}+x_{35},\\
x_{8}+x_{28}+x_{10}+x_{30}+x_{12}+x_{32}+x_{14}+x_{34}+x_{16}+x_{36},\\
x_{9}+x_{29}+x_{11}+x_{31}+x_{13}+x_{33}+x_{15}+x_{35}+x_{17}+x_{37},\\
x_{10}+x_{30}+x_{12}+x_{32}+x_{14}+x_{34}+x_{16}+x_{36}+x_{18}+x_{38},\\
x_{11}+x_{31}+x_{13}+x_{33}+x_{15}+x_{35}+x_{17}+x_{37}+x_{19}+x_{39},\\
x_{12}+x_{32}+x_{14}+x_{34}+x_{16}+x_{36}+x_{18}+x_{38}+x_{20}+x_{40}\}$.\\
\end{example}

%%%%%%%%%%%%%%%%%%%%%%%%%%%%%%%%%%%%%%%%%%%%%
\noindent{
\bf Class (ii):} This class of index coding problems is defined as follows: There is an  integer $\lambda$ such that $D$ divides $K-\lambda$ and $\lambda$ divides $D$ and the antidote pattern is 
$${\cal K}_k=\left\{
                \begin{array}{ll}
                  \{x_{k+D}\},\ $if$\ k\leq K-D-\lambda\\
                  \{x_{k+\lambda},x_{k+2\lambda}\dots,x_{k+D}\},\ $if$\ K-D-\lambda<k\leq K
                  \end{array}
              \right.
$$
An optimal length index code for this class is known to be \cite{MRRarXiv} \\
$\mathfrak{C}=\{{x_{i+(j-1)D}+x_{i+jD}}|\ i = \{1,2,\dots,D\},\ j= \{1,2,\dots,n-1\}\}\\ \cup \{{x_{K-\lambda+r}+x_{K-\lambda+r-\lambda}+\dots+x_{K-\lambda+r-t\lambda}} \\
~~~~~~~~~~~|\ r = \{1,2,\dots,\lambda\}, \ t = \{1,2,\dots,\frac{D}{\lambda}$\}\} \\
 for $\frac{K-\lambda}{D}>1$  and $\frac{K-\lambda}{D}=n$.

%%%-----------------------------------------------
\begin{corollary}
\label{cor5}
The lifting construction for this class leads to an index coding problem with $mK$ messages having the side information pattern given below:\\ 
\noindent
For $1 \leq k \leq mK,$ if $(k \ mod \ K)\leq K-D-\lambda$ then 
$${\cal K}_k=\left\{
                \begin{array}{ll}
                  \ x_{k+K},x_{k+2K},x_{k+3K},\cdots,x_{k+(m-1)K} \\
                                  \ x_{k+K+D},x_{k+2K+D},x_{k+3K+D},\cdots,x_{k+(m-1)K+D}. \\ 
                  \end{array}
              \right.
$$
and if  $K-D-\lambda<(k \ mod \ K)\leq K$ then
$${\cal K}_k=\left\{
                \begin{array}{ll}
                  \ x_{k+K},x_{k+2K},x_{k+3K},\cdots,x_{k+(m-1)K} \\
                  \ x_{k+\lambda},x_{k+\lambda +K},x_{k+\lambda +2K},\cdots,x_{k+\lambda +(m-1)K}\\
                  \ x_{k+2\lambda},x_{k+2\lambda +K},x_{k+2\lambda +2K},\cdots,x_{k+2\lambda+(m-1)K} \\
                  \ \vdots \\
                  \ x_{k+D},x_{k+D+K},x_{k+D+2K},\cdots,x_{k+D+(m-1)K}. \\
                  \end{array}
              \right.
$$
The resulting lifted code $\mathfrak{C}^{(m)}$ is given by \\
$\mathfrak{C}^{(m)}=\{{x_{i+(j-1)D}^\prime +x_{i+jD}^\prime }|~ 1 \leq i \leq D,~ 1 \leq j \leq n-1 \} \\   
~~~~~ \cup \{{x_{K-\lambda+r}^\prime +x_{K-\lambda+r-\lambda}^\prime +\dots+x_{K-\lambda+r-t\lambda}^\prime } \\
~~~~~~~~~ |\ r = \{1,2,\dots,\lambda\},\ t = \{1,2,\dots,\frac{D}{\lambda}\}\} $\\
$~~~~~~~~~~~~~~~~~$ for $\frac{K-\lambda}{D}>1$, $\frac{K-\lambda}{D}=n$ and \\ 
$~~$ $x_{l}^\prime =x_{l}+x_{l+K}+x_{l+2K}+\dots+x_{l+(m-1)K}$  for $1 \leq l \leq K$. \\
\end{corollary}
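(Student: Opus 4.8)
The plan is to prove Corollary~\ref{cor5} by transporting the decoding mechanism of Theorem~\ref{thm1} to the base problem of \textbf{Class (ii)}, the only genuinely new feature being that \textbf{Class (ii)} does not have a single uniform shift pattern: its antidote set depends on whether $k\bmod K$ is at most $K-D-\lambda$ or not, so Theorem~\ref{thm1} cannot be quoted verbatim and its proof must be re-run. First I would fix the substitution $x_l'=x_l+x_{l+K}+\dots+x_{l+(m-1)K}$ for $1\le l\le K$ and note that applying it symbol-by-symbol to the known optimal code $\mathfrak{C}$ of \textbf{Class (ii)} produces exactly the stated $\mathfrak{C}^{(m)}$, with the same number of codewords and hence the unchanged length $l=K-\lambda$. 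I would also write each receiver index uniquely as $k=k_0+jK$ with $1\le k_0\le K$ and $0\le j\le m-1$, so that $k_0$ is the representative controlling which of the two regions the antidote pattern of $R_k$ falls into.

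Next I would isolate the two structural facts that make the lift work, exactly as in the proof of Theorem~\ref{thm1}. The first is the shift invariance $x_i'=x_{i+K}'=\dots=x_{i+(m-1)K}'$, immediate from the definition of the primed symbols. The second is that, in each region, the lifted antidote of $R_k$ supplies (a) the full vertical block $x_{k+K},x_{k+2K},\dots,x_{k+(m-1)K}$, i.e.\ every summand of $x_k'$ except $x_k$ itself, and (b) for every shift $s$ occurring in the base antidote of $R_{k_0}$ --- namely $s=D$ when $k_0\le K-D-\lambda$, and $s\in\{\lambda,2\lambda,\dots,D\}$ otherwise --- the entire block $x_{k+s},x_{k+s+K},\dots,x_{k+s+(m-1)K}$, so that $R_k$ can assemble the primed symbol $x_{k+s}'$. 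Checking (b) against the two displayed antidote patterns, and confirming that each such block is present in full (including the $i=0$ term $x_{k+s}$, which is needed to form $x_{k+s}'$), is the step demanding care because of the non-uniformity; I expect this bookkeeping to be the main, though routine, obstacle.

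Granting these facts, the decoding argument is identical to that of Theorem~\ref{thm1}. For a receiver $R_{k_0}$ with $1\le k_0\le K$, the primed symbols obey the same linear relations among themselves as the original messages do, so the fact that $\mathfrak{C}$ decodes $x_{k_0}$ in the base problem translates into the following: using $\mathfrak{C}^{(m)}$ together with the primed antidotes $x_{k_0+s}'$, which $R_{k_0}$ can form by fact (b), the receiver recovers $x_{k_0}'$, and fact (a) then peels off the vertical block to yield the demand $x_{k_0}$. For a general $R_{k_0+jK}$ with $j\ge 1$, shift invariance gives $x_{k_0+jK}'=x_{k_0}'$, recovered in the same way, and fact (a) again isolates $x_{k_0+jK}$. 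Since every $k\in\{1,\dots,mK\}$ has this form, all demands are satisfied and $\mathfrak{C}^{(m)}$ is a valid scalar linear index code of length $K-\lambda$.

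Finally I would address optimality, noting that Theorem~\ref{thm2} does \emph{not} apply here: the length $l=K-\lambda$ strictly exceeds $K'-a_d'=K-D$, so the consecutive-antidote bound used in Theorem~\ref{thm2} is not tight. Instead I would argue that optimality is inherited from the base code, since the lift preserves the length $K-\lambda$, and conclude by remarking that a matching lower bound of the same type as the one establishing optimality for \textbf{Class (ii)} in \cite{MRRarXiv} should carry over to the lifted antidote structure; I would present this inheritance as a concluding observation rather than re-deriving the lower bound.
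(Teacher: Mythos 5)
The paper offers no explicit proof of Corollary~\ref{cor5}: it is presented as a direct instantiation of Theorem~\ref{thm1}, and your proposal is exactly that instantiation made rigorous, so in substance you follow the paper's own route. Your observation that Theorem~\ref{thm1} cannot be quoted verbatim --- its hypothesis requires a single antidote pattern common to all receivers, whereas \textbf{Class (ii)} is region-dependent --- is correct and is precisely the point the paper glosses over; re-running the proof with your facts (a) and (b), which are the counterparts of \eqref{antidotestructure1} and \eqref{primestructure}, is the right repair, and your decoding step (identities among linear forms are preserved when the free variables $y_l$ are replaced by the linear forms $x_l^\prime$, after which the vertical block is peeled off) is sound.

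One caution, though: the single check you defer (``granting these facts'') is the one that fails against the statement as printed. In the first region the displayed shifted block is $x_{k+K+D}, x_{k+2K+D}, \dots, x_{k+(m-1)K+D}$, which omits the $i=0$ term $x_{k+D}$ --- the very term you single out as indispensable for assembling $x_{k+D}^\prime$. With the pattern taken literally, a region-one receiver can strip its known symbols from the codeword $x_k^\prime + x_{k+D}^\prime$ and is left with the two unknowns $x_k + x_{k+D}$, so the corollary as printed would be false. This is a typo (the same slip appears in Corollary~\ref{cor1}): the pattern mandated by \eqref{antidote2}, and confirmed by Example~\ref{ex5}, where ${\cal K}_k = \{x_{k+21}, x_{k+4}, x_{k+25}\}$ indeed contains $x_{k+D} = x_{k+4}$, does include $x_{k+D}$. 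A complete write-up should flag and correct this rather than assert that the bookkeeping passes. Your closing remark on optimality is apt: the base code has length $K-\lambda$, which exceeds $K - a_d = K-D$ whenever $\lambda < D$, so Theorem~\ref{thm2} gives nothing here --- presumably why Corollary~\ref{cor5}, unlike Corollaries~\ref{cor4} and~\ref{cor7}, claims only validity and not optimality. Your proposed ``inheritance'' of optimality from the base code is unsubstantiated (the consecutive-antidote bound is not tight at length $K-\lambda$) and should be dropped or labelled a conjecture, but it is not needed for the statement as written.
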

%%%%%%%%%%%%%%%%%%%%%%%%%%%%%
\begin{example}
\label{ex5}
Consider the case $K=21~ D=4,~ \lambda=1$ and
$${\cal K}_k = \left\{
                                                \begin{array}{ll}
                                                \{x_{k+4}\},$ if $1\leq k \leq 16\\
                                                \{x_{k+1},x_{k+2},x_{k+3},x_{k+4}\},$ if $17 \leq k \leq 21
                                                \end{array}
                                \right. \\ $$
The proposed code is\\
$\mathfrak{C}=\{x_{1}+x_{5}, x_{2}+x_{6}, x_{3}+x_{7},x_{4}+x_{8}, x_{5}+x_{9}, \\
~~~~~~~~ x_{6}+x_{10}, x_{7}+x_{11}, x_{8}+x_{12}, x_{9}+x_{13}, x_{10}+x_{14}, \\
~~~~~~~~ x_{11}+x_{15}, x_{12}+x_{16}, x_{13}+x_{17}, x_{14}+x_{18}, x_{15}+x_{19}, \\
~~~~~~~~x_{16}+x_{20}, x_{17}+x_{18}+x_{19}+x_{20}+x_{21}\}$. \\
Let \mbox{$m=2$}. Then \mbox{$K=42,~ D=25$} and
$${\cal K}_k = \left\{
                                                \begin{array}{ll}
                                                \{x_{k+21},~ x_{k+4},~ x_{k+25}\},$ if $1\leq k \leq 16,~ 22\leq k \leq 37\\
                                                \{x_{k+1},x_{k+2},x_{k+3},x_{k+4},\\
                                                 x_{k+21},x_{k+22},x_{k+23},x_{k+24},~$ if $17 \leq k \leq 21,
                                                 \\x_{k+25}\}~~~~~~~~~~~~~~~~~~~~~~~~~~~~~~38 \leq k \leq 42
                                                \end{array}
                                \right.\\$$
The code is given by \\
$\mathfrak{C}_{13}$=$\{x_{1}+x_{22}+x_{5}+x_{26},\ ~~x_{2}+x_{23}+x_{6}+x_{27},\\
~~~~~~~~ x_{3}+x_{24}+x_{7}+x_{28},\ ~~x_{4}+x_{25}+x_{8}+x_{29},\\
~~~~~~~~ x_{5}+x_{26}+x_{9}+x_{30},\ ~~x_{6}+x_{27}+x_{10}+x_{31},\\
~~~~~~~~ x_{7}+x_{28}+x_{11}+x_{32},\ ~x_{8}+x_{29}+x_{12}+x_{33},\\
~~~~~~~~ x_{9}+x_{30}+x_{13}+x_{34},\ ~x_{10}+x_{31}+x_{14}+x_{35},\\
~~~~~~~~ x_{11}+x_{32}+x_{15}+x_{36},\ x_{12}+x_{33}+x_{16}+x_{37},\\
~~~~~~~~ x_{13}+x_{34}+x_{17}+x_{38},\ x_{14}+x_{35}+x_{18}+x_{39},\\
~~~~~~~~ x_{15}+x_{36}+x_{19}+x_{40},\ x_{16}+x_{37}+x_{20}+x_{41},\\
~~~~~~~~~~~x_{17}+x_{38}+x_{18}+x_{39}+x_{19}+x_{40}+x_{20}+x_{41}+x_{21}+x_{42}\}.$\\

Let \mbox{$m=3$}. Then \mbox{$K=63,~ D=46$} and
$${\cal K}_k = \left\{
                                                \begin{array}{ll}
                                                \{x_{k+21},x_{k+42},x_{k+4},x_{k+25},x_{k+46}\},$ if $1\leq k \leq 16,\\
                                                ~~~~~~~~~~~22\leq k \leq 37,~ 43\leq k \leq 58\\
                                                \{x_{k+1},x_{k+2},x_{k+3},x_{k+4},\\ x_{k+21},x_{k+22},x_{k+23},x_{k+24},       \\
                                                x_{k+25},x_{k+42},x_{k+43},x_{k+44},~$ ~~~~~~~~if $17 \leq k \leq 21,\\
                                                x_{k+45},x_{k+46}\}~~~~~~~~~~~~~~~~38 \leq k \leq 42,~ 59 \leq k \leq 63
                                                \end{array}
                                \right.\\$$
The code is given by \\
$\mathfrak{C}_{14}$=$\{x_{1}+x_{22}+x_{43}+x_{5}+x_{26}+x_{47},\\
x_{2}+x_{23}+x_{44}+x_{6}+x_{27}+x_{48},\\
x_{3}+x_{24}+x_{45}+x_{7}+x_{28}+x_{49},\\
x_{4}+x_{25}+x_{46}+x_{8}+x_{29}+x_{50},\\
x_{5}+x_{26}+x_{47}+x_{9}+x_{30}+x_{51},\\
x_{6}+x_{27}+x_{48}+x_{10}+x_{31}+x_{52},\\
x_{7}+x_{28}+x_{49}+x_{11}+x_{32}+x_{53},\\
x_{8}+x_{29}+x_{50}+x_{12}+x_{33}+x_{54},\\
x_{9}+x_{30}+x_{51}+x_{13}+x_{34}+x_{55},\\
x_{10}+x_{31}+x_{52}+x_{14}+x_{35}+x_{56},\\
x_{11}+x_{32}+x_{53}+x_{15}+x_{36}+x_{57},\\
x_{12}+x_{33}+x_{54}+x_{16}+x_{37}+x_{58},\\
x_{13}+x_{34}+x_{55}+x_{17}+x_{38}+x_{59},\\
x_{14}+x_{35}+x_{56}+x_{18}+x_{39}+x_{60},\\
x_{15}+x_{36}+x_{57}+x_{19}+x_{40}+x_{61},\\
x_{16}+x_{37}+x_{58}+x_{20}+x_{41}+x_{62},\\
x_{17}+x_{38}+x_{59}+x_{18}+x_{39}+x_{60}+x_{19}+x_{40}+x_{61}+x_{20}+x_{41}+x_{62}+x_{21}+x_{42}+x_{63}\}.$
\end{example}
%%%%%%%%%%%%%%%%%%%%%%%%%%%%%%%%%%%%%%
\noindent
{\bf Class (iii):} For this class of problems there is an integer $\lambda$ such that $D+\lambda$ divides $K$ and $\lambda$ divides $D$ with the antidote being ${\cal K}_k=\{x_{k+\lambda},x_{k+2\lambda},...,x_{k+D}\}.$ An optimal linear index code for this class of problems is known to be \cite{MRRarXiv} \\
$\mathfrak{C}=\{x_{i+j\lambda}+x_{i+(j+1)\lambda}+\dots+x_{i+(j+p)\lambda}\\
~~~~~~~~~~~~~~~  | i =\{1,2,\dots,\lambda\},\ j= \{1,2,\dots,\frac{K-D-\lambda}{\lambda}$\}\} \\
where  $\frac{D}{\lambda}=p$ and $\frac{K}{D+\lambda}=n.$ \\

\begin{corollary}
\label{cor6}
The lifting construction for a problem in this class gives a problem with $mK$ messages having antidotes for $1 \leq k \leq mK$ given by    
$${\cal K}_k=\left\{
                \begin{array}{ll}
                  \ x_{k+K},x_{k+2K},x_{k+3K},\cdots,x_{k+(m-1)K} \\
                  \ x_{k+\lambda},x_{k+\lambda +K},x_{k+\lambda +2K},\cdots,x_{k+\lambda +(m-1)K}\\
                  \ x_{k+2\lambda},x_{k+2\lambda +K},x_{k+2\lambda +2K},\cdots,x_{k+2\lambda+(m-1)K} \\
                  \ \vdots \\
                  \ x_{k+D},x_{k+D+K},x_{k+D+2K},\cdots,x_{k+D+(m-1)K}. \\
                  \end{array}
              \right.
$$
with  scalar linear code for the lifted problem given by \\
$\mathfrak{C}^{(m)}=\{x_{i+j\lambda}^\prime +x_{i+(j+1)\lambda}^\prime +\dots+x_{i+(j+p)\lambda}^\prime  \\ 
~~~~~~~~~~~~~~~~~~~~ | i =\{1,2,\dots,\lambda\},\ j= \{1,2,\dots,\frac{K-D-\lambda}{\lambda}$\}\} \\
where  $\frac{D}{\lambda}=p$, $\frac{K}{D+\lambda}=n$ and \\
$~~~~~~~~~ x_{l}^\prime =x_{l}+x_{l+K}+x_{l+2K}+\dots+x_{l+(m-1)K}$ \\
for $l=1,2,\cdots,K$. \\
\end{corollary}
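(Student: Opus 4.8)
The plan is to obtain this statement as an immediate specialization of Theorem \ref{thm1}, so that essentially all of the work lies in matching the parameters of Class (iii) to the generic antidote pattern \eqref{antidote1} and then reading off the output of the lifting construction. First I would set $d=\frac{D}{\lambda}=p$ and $a_i=i\lambda$ for $1\leq i\leq d$, so that $a_d=p\lambda=D$ and the chain $a_1<a_2<\cdots<a_d<K$ required by \eqref{antidote1} becomes $\lambda<2\lambda<\cdots<D<K$, which is consistent because $\lambda$ divides $D$ and $D+\lambda$ divides $K$. Under this identification the generic antidote set \eqref{antidote1} reads exactly ${\cal K}_k=\{x_{k+\lambda},x_{k+2\lambda},\ldots,x_{k+D}\}$, i.e.\ the pattern defining Class (iii). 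One notational point to dispatch at the outset is that the base code $\mathfrak{C}$ is written in the $x$-variables, whereas Theorem \ref{thm1} names the base messages $y_1,\ldots,y_K$; I would simply re-read $\mathfrak{C}$ as a code on $y_1,\ldots,y_K$. Since \cite{MRRarXiv} guarantees that this $\mathfrak{C}$ is a valid scalar linear index code for the base problem, the hypotheses of Theorem \ref{thm1} are satisfied and the theorem applies verbatim.

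Next I would substitute $a_i=i\lambda$ into the lifted antidote pattern \eqref{antidote2} and check that it reproduces the table stated in the corollary. The first row of \eqref{antidote2}, namely $x_{k+K},x_{k+2K},\ldots,x_{k+(m-1)K}$, is the shift-by-$K$ block that lifting always introduces, while the $i$-th subsequent row $x_{k+a_i},x_{k+a_i+K},\ldots,x_{k+a_i+(m-1)K}$ becomes $x_{k+i\lambda},x_{k+i\lambda+K},\ldots,x_{k+i\lambda+(m-1)K}$; letting $i$ run from $1$ to $p$ yields row by row precisely the antidote table of the corollary, with the last row $i=p$ supplying the $x_{k+D+jK}$ terms. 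This verification, carried out for all $1\leq k\leq mK$, is where I expect the routine effort to concentrate: it is purely index-bookkeeping, and since lifting is a direct construction there is no genuine obstacle here beyond confirming that the two tabular descriptions of ${\cal K}_k$ coincide.

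Finally, Theorem \ref{thm1} prescribes the lifted code as the image of $\mathfrak{C}$ under the term-wise substitution $y_l\mapsto x_l'=\sum_{i=0}^{m-1}x_{l+iK}$. Because this substitution replaces symbols without changing the number of codewords or their additive structure, the index ranges $i\in\{1,\ldots,\lambda\}$ and $j\in\{1,\ldots,\tfrac{K-D-\lambda}{\lambda}\}$ carry over unchanged, and each base codeword $y_{i+j\lambda}+y_{i+(j+1)\lambda}+\cdots+y_{i+(j+p)\lambda}$ maps to $x_{i+j\lambda}'+x_{i+(j+1)\lambda}'+\cdots+x_{i+(j+p)\lambda}'$, which is exactly the expression claimed for $\mathfrak{C}^{(m)}$, of the same length as $\mathfrak{C}$. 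By Theorem \ref{thm1} this $\mathfrak{C}^{(m)}$ is a valid scalar linear index code for the lifted problem, completing the argument. The only genuinely new content is thus the check that the stated closed forms of the antidote table and of the code are the literal outputs of the construction under the assignment $a_i=i\lambda$, $d=p$.
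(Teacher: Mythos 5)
Your proposal is correct and matches the paper's (implicit) argument exactly: Corollary \ref{cor6} is stated as a direct specialization of Theorem \ref{thm1} with $d=p=\frac{D}{\lambda}$ and $a_i=i\lambda$, and your index-bookkeeping to recover the lifted antidote table from \eqref{antidote2} and the lifted code from the substitution $y_l\mapsto x_l^\prime$ is precisely what the paper relies on. Nothing further is needed.
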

%%%%%%%%%%%%%%%%%%%%%%%%%%%%%%%%%%%%%%%%%%%
\begin{example}
\label{ex6}
Consider the case $K=18,~ D=5,~ \lambda=1$ and ${\cal K}_k=\{x_{k+1},x_{k+2},\cdots,x_{k+5}\}$ for $k=1,2,\cdots,18.$\\
The optimal linear code for this case is\\
$\mathfrak{C}= \{
x_{1}+x_{2}+x_{3}+x_{4}+x_{5}+x_{6},~ x_{2}+x_{3}+x_{4}+x_{5}+x_{6}+x_{7}, \\
x_{3}+x_{4}+x_{5}+x_{6}+x_{7}+x_{8}, ~ x_{4}+x_{5}+x_{6}+x_{7}+x_{8}+x_{9}, \\
x_{5}+x_{6}+x_{7}+x_{8}+x_{9}+x_{10}, ~ x_{6}+x_{7}+x_{8}+x_{9}+x_{10}+x_{11}, \\
x_{7}+x_{8}+x_{9}+x_{10}+x_{11}+x_{12}, ~ x_{8}+x_{9}+x_{10}+x_{11}+x_{12}+x_{13}, \\
x_{9}+x_{10}+x_{11}+x_{12}+x_{13}+x_{14},\\ x_{10}+x_{11}+x_{12}+x_{13}+x_{14}+x_{15},\\
x_{11}+x_{12}+x_{13}+x_{14}+x_{15}+x_{16},\\ x_{12}+x_{13}+x_{14}+x_{15}+x_{16}+x_{17},\\
x_{13}+x_{14}+x_{15}+x_{16}+x_{17}+x_{18}\}.$\\

\noindent
Let~\mbox{$m=2$}. Then \mbox{$K=36,~ D=23$} and for $k=1,2,\cdots,36,$ \\
\mbox{${\cal K}_k=\{x_{k+1},x_{k+2},\cdots,x_{k+5},x_{k+19},x_{k+20},\cdots,x_{k+23}\}$} \\
The lifted code is given by \\
\begin{small}
{
$\mathfrak{C^{(2)}}= \{
x_{1}+x_{19}+x_{2}+x_{20}+x_{3}+x_{21}+x_{4}+x_{22}+x_{5}+x_{23}+x_{6}+x_{24},\\
x_{2}+x_{20}+x_{3}+x_{21}+x_{4}+x_{22}+x_{5}+x_{23}+x_{6}+x_{24}+x_{7}+x_{25},\\
x_{3}+x_{21}+x_{4}+x_{22}+x_{5}+x_{23}+x_{6}+x_{24}+x_{7}+x_{25}+x_{8}+x_{26},\\
x_{4}+x_{22}+x_{5}+x_{23}+x_{6}+x_{24}+x_{7}+x_{25}+x_{8}+x_{26}+x_{9}+x_{27},\\
x_{5}+x_{23}+x_{6}+x_{24}+x_{7}+x_{25}+x_{8}+x_{26}+x_{9}+x_{27}+x_{10}+x_{28},\\
x_{6}+x_{24}+x_{7}+x_{25}+x_{8}+x_{26}+x_{9}+x_{27}+x_{10}+x_{28}+x_{11}+x_{29},\\
x_{7}+x_{25}+x_{8}+x_{26}+x_{9}+x_{27}+x_{10}+x_{28}+x_{11}+x_{29}+x_{12}+x_{30},\\
x_{8}+x_{26}+x_{9}+x_{27}+x_{10}+x_{28}+x_{11}+x_{29}+x_{12}+x_{30}+x_{13}+x_{31},\\
x_{9}+x_{27}+x_{10}+x_{28}+x_{11}+x_{29}+x_{12}+x_{30}+x_{13}+x_{31}+x_{14}+x_{32},\\
x_{10}+x_{28}+x_{11}+x_{29}+x_{12}+x_{30}+x_{13}+x_{31}+x_{14}+x_{32}+x_{15}+x_{33},\\
x_{11}+x_{29}+x_{12}+x_{30}+x_{13}+x_{31}+x_{14}+x_{32}+x_{15}+x_{33}+x_{16}+x_{34},\\
x_{12}+x_{30}+x_{13}+x_{31}+x_{14}+x_{32}+x_{15}+x_{33}+x_{16}+x_{34}+x_{17}+x_{35},\\
x_{13}+x_{31}+x_{14}+x_{32}+x_{15}+x_{33}+x_{16}+x_{34}+x_{17}+x_{35}+x_{18}+x_{36}\}.$\\
}
\end{small}
\end{example}
%%%%%%%%%%%%%%%%%%%%%%%%%%%%%%%%%%%%%%%%%%%%%%%%%%%%%%%%%%%%%%%%%%%%%%
\noindent
{\bf Class (iv):} This class of problems consists of the cases for which there is an integer $\lambda$ such that $D$ divides $K+\lambda$ and $\lambda$ divides $D$ and the antidote is given by
$${\cal K}_k=\left\{
                \begin{array}{ll}
                  \{x_{k+D}\}$, if $\ k\leq K-2D+\lambda\\
                  \{x_{k+\lambda}, x_{k+2\lambda},...,x_{k+D}\}$, if $K-2D+\lambda<k\leq K,
                  \end{array}
              \right.$$
In \cite{MRRarXiv} it shown that an optimal index code for this class is \\
 $\mathfrak{C}=\{{x_{i+(j-1)D}+x_{i+jD}}|i = 1,2,\dots,D, ~~j = 1,2,\dots,n-2\}\\ 
~~~~~~~~~ \cup \{x_{K-2D+1+\lambda+i^\prime }+x_{K-D+1+i^\prime }+x_{K-\lambda+1+i^\prime  mod \lambda}| \\
~~~~~~~~~~~~~~~~~~~~~~~~~~~~~~~~~~ i^\prime  = \{0,1,2,\dots,p-1\}\}$\\
where $\frac{K+\lambda}{D}=n(>2)$, $p=K$ mod $D$ = $D-\lambda$.

\begin{corollary}
\label{cor7}
For an index coding problem of this class the lifting construction with parameter $m$ gives the index coding problem with $mK$ number of messages and the antidote pattern for $1 \leq k \leq mK$ given as follows:\\
If $(k \ mod \ K)\leq K-2D+\lambda,$ then 
$${\cal K}_k=\left\{
                \begin{array}{ll}
                  \ x_{k+K},x_{k+2K},x_{k+3K},\cdots,x_{k+(m-1)K} \\
				  \ x_{k+K+D},x_{k+2K+D},x_{k+3K+D},\cdots,x_{k+(m-1)K+D} \\ 
                  \end{array}
              \right.
$$
and if $K-2D+\lambda<(k \ mod \ K)\leq K,$ then
$${\cal K}_k=\left\{
                \begin{array}{ll}
                  \ x_{k+K},x_{k+2K},x_{k+3K},\cdots,x_{k+(m-1)K} \\
                  \ x_{k+\lambda},x_{k+\lambda +K},x_{k+\lambda +2K},\cdots,x_{k+\lambda +(m-1)K}\\
                  \ x_{k+2\lambda},x_{k+2\lambda +K},x_{k+2\lambda +2K},\cdots,x_{k+2\lambda+(m-1)K} \\
                  \ \vdots \\
                  \ x_{k+D},x_{k+D+K},x_{k+D+2K},\cdots,x_{k+D+(m-1)K}. \\
                  \end{array}
              \right.
$$

An optimal  lifted  scalar linear code is \\
$\mathfrak{C}^{(m)}=\{{x_{i+(j-1)D}^\prime +x_{i+jD}^\prime }| 1 \leq i \leq D, ~~ 1 \leq j \leq n-2 \}\\
~~~  \cup \{x_{K-2D+1+\lambda+i^\prime }^\prime +x_{K-D+1+i^\prime }^\prime +x_{K-\lambda+1+i^\prime  mod \lambda}^\prime  \\
~~~~~~~~~~~~~~~~~~ | 0 \leq i^\prime  \leq p-1\}$ \\
where $\frac{K+\lambda}{D}=n(>2)$, $p=K$ mod $D$ = $D-\lambda$ and \\
$ ~~~~~~x_{l}^\prime =x_{l}+x_{l+K}+x_{l+2K}+\dots+x_{l+(m-1)K}$ \\ 
for $l=1,2,\cdots,K$.
\end{corollary}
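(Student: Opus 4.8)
The plan is to recognize Corollary~\ref{cor7} as a direct specialization of the lifting construction of Theorem~\ref{thm1} together with the optimality criterion of Theorem~\ref{thm2}, the only genuinely new feature being that the antidote of \textbf{Class (iv)} is not a single symmetric pattern but splits into two cases according to the range of $k$. The first thing I would check is that the decoding argument in the proof of Theorem~\ref{thm1} never uses the symmetry of ${\cal K}_k$: it relies only on the identity $x_i^\prime = x_{i+K}^\prime = \dots = x_{i+(m-1)K}^\prime$ and on the fact that, for every receiver, the full group of each original antidote together with the remaining $m-1$ members of its own group are available. Hence the same argument applies verbatim to a problem whose original antidote support depends on $k$, provided the lifted antidotes are formed by the same rule.

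First I would derive the two lifted antidote patterns. Writing $k=k_0+iK$ with $1\le k_0\le K$ and $0\le i\le m-1$, receiver $R_k$ wants $x_k$, which lies in the group $\{x_{k_0},x_{k_0+K},\dots,x_{k_0+(m-1)K}\}$ summed by $x_{k_0}^\prime$. The lifting rule assigns to $R_k$ the remaining members $x_{k+K},\dots,x_{k+(m-1)K}$ of its own group and, for each offset $a$ in the original antidote of $R_{k_0}$, the complete group $\{x_{k+a},x_{k+a+K},\dots,x_{k+a+(m-1)K}\}$. For a generic $k_0$ (those with $k_0\le K-2D+\lambda$) the only original offset is $a=D$, giving the first pattern (the group $\{x_{k+D},x_{k+D+K},\dots,x_{k+D+(m-1)K}\}$ together with the own-group copies); for a boundary $k_0$ (those with $K-2D+\lambda<k_0\le K$) the offsets are $\lambda,2\lambda,\dots,D$, giving the second pattern. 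Since $k_0$ equals $(k \bmod K)$, this reproduces exactly the two cases stated in the corollary.

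Next I would verify decodability of $\mathfrak{C}^{(m)}$ by the two-step argument of Theorem~\ref{thm1}. Because $\mathfrak{C}^{(m)}$ is obtained from the known optimal code $\mathfrak{C}$ of \textbf{Class (iv)} by the substitution $y_l=x_l^\prime$, each receiver $R_k$ can first replay the original decoding of $y_{k_0}$ at the level of primed symbols, thereby recovering $x_{k_0}^\prime$: every primed symbol $x_{k_0+a}^\prime$ consumed by that decoding is reconstructible because all $m$ members of the group $\{x_{k_0+a+jK}\}$ lie in ${\cal K}_k$ by the previous paragraph, and since $k_0+a\ne k_0$ none of them is the demanded message. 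Having $x_{k_0}^\prime$, receiver $R_k$ subtracts the other members $x_{k+K},\dots,x_{k+(m-1)K}$ of its own group to isolate $x_k$. This argument is uniform over the two receiver types, so all $mK$ demands are met.

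Finally, optimality follows from Theorem~\ref{thm2}. Both lifted antidote patterns are contained in the consecutive pattern $\{x_{k+1},\dots,x_{k+(m-1)K+D}\}$, whose optimal length is $mK-\big((m-1)K+D\big)=K-D$; hence no scalar linear code for the lifted problem can be shorter than $K-D$. On the other hand the two families making up $\mathfrak{C}^{(m)}$ contribute $D(n-2)$ and $p=D-\lambda$ codewords, and since $Dn=K+\lambda$ their total is $D(n-2)+(D-\lambda)=K-D$, so $\mathfrak{C}^{(m)}$ attains the bound and is optimal. The one step demanding genuine care is the simultaneous decodability check for the two receiver types: one must confirm that the non-symmetric lifted antidote of \emph{every} receiver contains every member of every group whose primed symbol is used in the original decoding of $\mathfrak{C}$, which is precisely where the internal structure of the \textbf{Class (iv)} code and the full-group lifting rule have to be matched against each other.
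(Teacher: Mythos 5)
Your proposal is correct and follows essentially the same route as the paper, which states Corollary~\ref{cor7} without a separate proof, treating it as a direct instance of the lifting construction of Theorem~\ref{thm1} together with the optimality argument of Theorem~\ref{thm2}; your added care about the $k$-dependent (non-symmetric) antidote pattern and the explicit length count $D(n-2)+(D-\lambda)=K-D$ simply fill in details the paper leaves implicit. Note only that your derived first-case antidote correctly contains the full group $x_{k+D},x_{k+D+K},\dots,x_{k+D+(m-1)K}$, whereas the corollary's printed pattern starts at $x_{k+K+D}$ and omits $x_{k+D}$ --- a typo in the paper, as Example~\ref{ex7} ($K=19$, $D=5$, $m=2$, ${\cal K}_k=\{x_{k+19},x_{k+5},x_{k+24}\}$) confirms, and without which the decoding you describe would indeed fail.
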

%%%%%%%%%%%%%%%%%%%%%%%%%%%%%%%%%%%%%%%
\begin{example}
\label{ex7}
Consider the case $K=19,~ D=5,~ \lambda=1,$ and 
$${\cal K}_k = \left\{
						\begin{array}{ll}
						\{x_{k+5}\},$ if $1\leq k \leq 10\\
						\{x_{k+1},x_{k+2},\dots,x_{k+5}\},$ if $11 \leq k \leq 19.
						\end{array}
              			\right. \\ $$
The index code for this problem is \\
$\mathfrak{C}=\{x_{1}+x_{6},\ x_{6}+x_{11},\ x_{11}+x_{15}+x_{19},\\ x_{2}+x_{7},\ x_{7}+x_{12},\ x_{12}+x_{16}+x_{19},\\ x_{3}+x_{8},\ x_{8}+x_{13},\ x_{13}+x_{17}+x_{19},\\ x_{4}+x_{9},\ x_{9}+x_{14},\ x_{14}+x_{18}+x_{19}\}.\\ x_{5}+x_{10},\ x_{10}+x_{15},$\\

\noindent
Let \mbox{$m=2$}. Then the lifting construction leads to the problem with \mbox{$K=38,~ D=24$} and 
$${\cal K}_k = \left\{
						\begin{array}{ll}
						\{x_{k+19},~ x_{k+5},~ x_{k+24}\},$ if $1\leq k \leq 10,\\
						~~~~~~~~~~~~~~~~~~~~~~~~~~~~~20 \leq k \leq 29\\
						\{x_{k+1},x_{k+2},\dots,x_{k+5},\\
						 x_{k+19},x_{k+20},x_{k+21},\dots,x_{k+24}\},~$if $11 \leq k \leq 19,\\
						 ~~~~~~~~~~~~~~~~~~~~~~~~~~~~~~~~~~~~~~~~30 \leq k \leq 38
						\end{array}
              			\right.\\$$
The lifted code is given by \\
$\mathfrak{C}^{(2)}=\{x_{1}+x_{20}+x_{6}+x_{25},~x_{2}+x_{21}+x_{7}+x_{26},\\
x_{3}+x_{22}+x_{8}+x_{27},~x_{4}+x_{23}+x_{9}+x_{28},\\
x_{5}+x_{24}+x_{10}+x_{29},~x_{6}+x_{25}+x_{11}+x_{30},\\
x_{7}+x_{26}+x_{12}+x_{31},~x_{8}+x_{27}+x_{13}+x_{32},\\
x_{9}+x_{28}+x_{14}+x_{33},~x_{10}+x_{29}+x_{15}+x_{34},\\
x_{11}+x_{30}+x_{15}+x_{34}+x_{19}+x_{38},\\
x_{12}+x_{31}+x_{16}+x_{35}+x_{19}+x_{38},\\
x_{13}+x_{32}+x_{17}+x_{36}+x_{19}+x_{38},\\
x_{14}+x_{33}+x_{18}+x_{37}+x_{19}+x_{38}\}.$\\

\noindent
Let \mbox{$m=3$}. Then have the parameters \mbox{$K=57,~ D=43$} for the lifted problem with the antidotes 
$${\cal K}_k = \left\{
						\begin{array}{ll}
						\{x_{k+19},x_{k+38},x_{k+5},x_{k+24},x_{k+43}\},$ if $1\leq k \leq 10,\\~~~~~~~~~~~~~~~~~~~~~~~~~~~~20\leq k \leq 29,~ 39 \leq k \leq 48\\
						\{x_{k+1},x_{k+2},\dots,x_{k+5}\\
						 x_{k+19},x_{k+20},x_{k+21},\dots,x_{k+24}, $ if $11 \leq k \leq 19,\\
						 x_{k+38},x_{k+39},x_{k+40},\dots,x_{k+43}\},\ 30 \leq k \leq 38,\ 49 \leq k \leq 57
						\end{array}
              			\right.\\$$
The lifted code is given by \\
$\mathfrak{C}^{(3)}=\{x_{1}+x_{20}+x_{39}+x_{6}+x_{25}+x_{44},\\
x_{2}+x_{21}+x_{40}+x_{7}+x_{26}+x_{45},\\
x_{3}+x_{22}+x_{41}+x_{8}+x_{27}+x_{46},\\
x_{4}+x_{23}+x_{42}+x_{9}+x_{28}+x_{47},\\
x_{5}+x_{24}+x_{43}+x_{10}+x_{29}+x_{48},\\
x_{6}+x_{25}+x_{44}+x_{11}+x_{30}+x_{49},\\
x_{7}+x_{26}+x_{45}+x_{12}+x_{31}+x_{50},\\
x_{8}+x_{27}+x_{46}+x_{13}+x_{32}+x_{51},\\
x_{9}+x_{28}+x_{47}+x_{14}+x_{33}+x_{52},\\
x_{10}+x_{29}+x_{48}+x_{15}+x_{34}+x_{53},\\
x_{11}+x_{30}+x_{49}+x_{15}+x_{34}+x_{53}+x_{19}+x_{38}+x_{57},\\
x_{12}+x_{31}+x_{50}+x_{16}+x_{35}+x_{54}+x_{19}+x_{38}+x_{57},\\
x_{13}+x_{32}+x_{51}+x_{17}+x_{36}+x_{55}+x_{19}+x_{38}+x_{57},\\
x_{14}+x_{33}+x_{52}+x_{18}+x_{37}+x_{56}+x_{19}+x_{38}+x_{57}.\}$\\
\end{example}
%%%%%%%%%%%%%%%%%%%%%%%%%%%%%%%%%%%%%
\section{Discussion}
\label{sec6}
In this paper a lifting construction is given for scalar linear index codes of multiple unicast index problems which results in a sequence of index coding problems with integer multiple number of messages and receivers  for any arbitrary integer. Moreover, it is shown that if the problem with which the lifting begins  has an optimal linear index code then it induces an optimal linear index code for the lifted problem. This lifting construction has been used on ten classes of index coding problems for which optimal linear index codes are known and new classes codes have been obtained starting from these classes of codes.

The side information in a multicast index coding problem is represented by a directed graph G = ($V$,$E$) with $V = \{1,2,...,K\}$ is the set of vertices and E is the set of edges such that the directed edge $(i,j)\in E$ if receiver (destination) $R_{i}$ knows $x_{j}$. This graph $G$ for a given index coding problem is called side information graph \cite{YBJK}. In a given index coding problem with side information graph $G$ = $(V,E)$, an edge $e\in E$ is said to be critical if the removal of $e$ from $G$ strictly reduce the capacity. The index coding problem $G=(V,E)$ is critical if every $e\in E$ is critical \cite{TSG}. An interesting problem to pursue is to find classes of codes which under lifting lead to critical index coding problems. 

Another interesting direction of further research is to study the suitability of the new classes of codes presented in this paper for application to noise broadcasting problem. Recently, it has been observed that in a noisy index coding problem it is desirable for the purpose of reducing the probability of error that  the receivers use as small a number of transmissions from the source as possible and linear index codes with this property have been reported in \cite{TRCR}, \cite{KaR}. While the report \cite{TRCR} considers fading broadcast channels, in \cite{AnR} AWGN channels are considered and it is reported that linear index codes with minimum length (capacity achieving codes or optimal length codes) help to facilitate to achieve more reduction in probability of error compared to non-minimum length codes for receivers with large amount of side-information. These aspects remain to be investigated for the new classes of sequences of codes presented in this paper.
%%%%%%%%%%%%%%%%%%%%%%%%%%%%%%%%%%%%%%%%%%%%%%%

%%%%%%%%%%%%%%%%%%%%%%%%%%%%%%%%

%%%%%%%%%%%%%%%%%%%%%%%%%%%%%%%%%%%%%%%%%%%%%%%
\end{document}